\DeclarePairedDelimiter\bra{\langle}{\rvert}
\DeclarePairedDelimiter\ket{\lvert}{\rangle}
\newenvironment{proof}{\textbf{Proof:}}{\hfill$\Box$\newline}
\tikzstyle{env}=[copoint,regular polygon rotate=0,minimum width=0.2cm, fill=black]
\tikzstyle{every picture}=[baseline=-0.25em]
\tikzstyle{dotpic}=[scale=0.5]
\tikzstyle{diredges}=[every to/.style={diredge}]
\tikzstyle{dot graph}=[shorten <=-0.1mm,shorten >=-0.1mm,scale=0.6]
\tikzstyle{plot point}=[circle,fill=black,minimum width=2mm,inner sep=0]
\tikzstyle{braceedge}=[decorate,decoration={brace,amplitude=2mm,raise=-1mm}]
\tikzstyle{small braceedge}=[decorate,decoration={brace,amplitude=1mm,raise=-1mm}]
\tikzstyle{left hook arrow}=[left hook-latex]
\tikzstyle{right hook arrow}=[right hook-latex]
\tikzstyle{dtriangle}=[fill=yellow,draw=black,shape=isosceles triangle,shape border rotate=-90,isosceles triangle stretches=true,inner sep=0.8pt,minimum width=0.25cm,minimum height=2mm]
\tikzstyle{vtriang}=[fill=yellow,draw=black,shape=isosceles triangle,shape border rotate=180,isosceles triangle stretches=true,inner sep=0.8pt,minimum width=0.25cm,minimum height=2mm]
\tikzstyle{trigmc}=[fill=green,draw=black,shape=isosceles triangle,shape border rotate=90,isosceles triangle stretches=true,inner sep=0.8pt,minimum width=0.3cm,minimum height=2mm]
\tikzstyle{vrt}=[fill=yellow,draw=black,shape=isosceles triangle,shape border rotate=0,isosceles triangle stretches=true,inner sep=0.8pt,minimum width=0.25cm,minimum height=2mm]
\tikzstyle{H box}=[rectangle,fill=yellow,draw=black,xscale=0.8,yscale=0.8, inner sep=0.6pt]
\tikzstyle{gbox}=[rectangle,fill=green,draw=black,xscale=1.0,yscale=1.0, inner sep=1.pt]
\tikzstyle{rbox}=[rectangle,fill=red,draw=black,xscale=1.0,yscale=1.0, inner sep=1.pt]
\tikzstyle{zhbx}=[rectangle,fill=white,draw=black,xscale=1.0,yscale=1.0, inner sep=1.6pt]
\tikzstyle{newh}=[rectangle,fill=yellow,draw=black,xscale=2.0,yscale=2.0, inner sep=1.6pt]
\tikzstyle{dashededge}=[-, dashed] 
\tikzstyle{triangle}=[fill=yellow,draw=black,shape=isosceles triangle,shape border rotate=90,isosceles triangle stretches=true,inner sep=0.8pt,minimum width=0.25cm,minimum height=2mm]
\tikzstyle{bn}=[circle,fill=black,draw=black,scale=.4]
\tikzstyle{wn}=[circle,fill=white,draw=black,scale=.6]
\tikzstyle{dn}=[circle,fill=none,draw=gray]
\tikzstyle{bspider}=[fill=black,draw=black,scale=1,shape=isosceles triangle,shape border rotate=-90,isosceles triangle stretches=true,inner sep=1pt,minimum width=0.4cm,minimum height=3mm]
\tikzstyle{dbspider}=[fill=black,draw=black,scale=1,shape=isosceles triangle,shape border rotate=90,isosceles triangle stretches=true,inner sep=1pt,minimum width=0.4cm,minimum height=3mm]
\tikzstyle{L}=[rectangle,shape=rectangle,fill=green,draw=black]
\tikzstyle{Z dot}=[inner sep=0mm, minimum size=2mm, shape=circle, draw=black, fill={rgb,255: red,221; green,255; blue,221}]
\tikzstyle{Z phase dot}=[minimum size=5mm, font={\footnotesize\boldmath}, shape=rectangle, rounded corners=2mm, inner sep=0.2mm, outer sep=-2mm, scale=0.8, draw=black, fill={rgb,255: red,221; green,255; blue,221}]
\tikzstyle{X dot}=[Z dot, shape=circle, draw=black, fill={rgb,255: red,255; green,136; blue,136}]
\tikzstyle{X phase dot}=[Z phase dot, fill={rgb,255: red,255; green,136; blue,136}, font={\footnotesize\boldmath}]
\tikzstyle{hadamard edge}=[-, dashed, dash pattern=on 2pt off 0.5pt, thick, draw={rgb,255: red,68; green,136; blue,255}]
\tikzstyle{black dot}=[inner sep=0.7mm,minimum width=0pt,minimum height=0pt,fill=black,draw=black,shape=circle]
\tikzstyle{dot}=[black dot]
\tikzstyle{smalldot}=[inner sep=0.4mm,minimum width=0pt,minimum height=0pt,fill=black,draw=black,shape=circle]
\tikzstyle{white dot}=[dot,fill=white]
\tikzstyle{antipode}=[white dot,inner sep=0.3mm,font=\footnotesize]
\tikzstyle{smallwhitedot}=[smalldot,fill=white]
\tikzstyle{alt white dot}=[white dot,label={[xshift=3.07mm,yshift=-0.05mm,font=\footnotesize]left:$*$}]
\tikzstyle{gray dot}=[dot,fill=gray!40!white]
\tikzstyle{smallgraydot}=[smalldot,fill=gray!40!white]
\tikzstyle{box vertex}=[draw=black,rectangle]
\tikzstyle{small box}=[box vertex,fill=white]
\tikzstyle{whitebg}=[fill=white,inner sep=2pt]
\tikzstyle{graph state vertex}=[sg vertex,fill=black]
\tikzstyle{wide copoint}=[fill=white,draw=black,shape=isosceles triangle,shape border rotate=90,isosceles triangle stretches=true,inner sep=1pt,minimum width=1.5cm,minimum height=5mm]
\tikzstyle{wide point}=[fill=white,draw=black,shape=isosceles triangle,shape border rotate=-90,isosceles triangle stretches=true,inner sep=1pt,minimum width=1.5cm,minimum height=4mm]
\tikzstyle{very wide copoint}=[fill=white,draw=black,shape=isosceles triangle,shape border rotate=-90,isosceles triangle stretches=true,inner sep=1pt,minimum width=2.5cm,minimum height=4mm]
\tikzstyle{very wide empty copoint}=[draw=black,shape=isosceles triangle,shape border rotate=-90,isosceles triangle stretches=true,inner sep=1pt,minimum width=2.5cm,minimum height=4mm]
\tikzstyle{symm}=[ultra thick,shorten <=-1mm,shorten >=-1mm]
\tikzstyle{square box}=[rectangle,fill=white,draw=black,minimum height=5mm,minimum width=5mm,font=\small]
\tikzstyle{square gray box}=[rectangle,fill=gray!30,draw=black,minimum height=6mm,minimum width=6mm]
\tikzstyle{copoint}=[regular polygon,regular polygon sides=3,draw=black,scale=0.75,inner sep=-0.5pt,minimum width=7mm,fill=white]
\tikzstyle{point}=[regular polygon,regular polygon sides=3,draw=black,scale=0.75,inner sep=-0.5pt,minimum width=7mm,fill=white,regular polygon rotate=180]
\tikzstyle{gray point}=[point,fill=gray!40!white]
\tikzstyle{gray copoint}=[copoint,fill=gray!40!white]
\newcommand{\edgearrow}{{\arrow[black]{>}}}
\newcommand{\edgetick}{{\arrow[black,scale=0.7,very thick]{|}}}
\tikzstyle{diredge}=[->]
\tikzstyle{rdiredge}=[<-]
\tikzstyle{medium diredge}=[->]
\tikzstyle{short diredge}=[->]
\tikzstyle{halfedge}=[-)]
\tikzstyle{other halfedge}=[(-]
\tikzstyle{freeedge}=[(-)]
\tikzstyle{white edge}=[line width=5pt,white]
\tikzstyle{tick}=[postaction=decorate,decoration={markings, mark=at position 0.5 with \edgetick}]
\tikzstyle{small map edge}=[|-latex, gray!60!blue, shorten <=0.9mm, shorten >=0.5mm]
\tikzstyle{thick dashed edge}=[very thick,dashed,gray!40]
\tikzstyle{map edge}=[|-latex,very thick, gray!40, shorten <=1mm, shorten >=0.5mm]
\tikzstyle{tickedge}=[postaction=decorate,
\tikzstyle{dirtickedge}=[postaction=decorate,
\tikzstyle{dirdoubletickedge}=[postaction=decorate,
\newcommand{\boxshape}[3]{%
\pgfdeclareshape{#1}{
\inheritsavedanchors[from=rectangle] 
\inheritanchorborder[from=rectangle]
\inheritanchor[from=rectangle]{center}
\inheritanchor[from=rectangle]{north}
\inheritanchor[from=rectangle]{south}
\inheritanchor[from=rectangle]{west}
\inheritanchor[from=rectangle]{east}
\backgroundpath{
\southwest \pgf@xa=\pgf@x \pgf@ya=\pgf@y
\northeast \pgf@xb=\pgf@x \pgf@yb=\pgf@y

\@tempdima=#2
\@tempdimb=#3

\pgfpathmoveto{\pgfpoint{\pgf@xa - 5pt + \@tempdima}{\pgf@ya}}
\pgfpathlineto{\pgfpoint{\pgf@xa - 5pt - \@tempdima}{\pgf@yb}}
\pgfpathlineto{\pgfpoint{\pgf@xb + 5pt + \@tempdimb}{\pgf@yb}}
\pgfpathlineto{\pgfpoint{\pgf@xb + 5pt - \@tempdimb}{\pgf@ya}}
\pgfpathlineto{\pgfpoint{\pgf@xa - 5pt + \@tempdima}{\pgf@ya}}
\pgfpathclose
}
}}
\tikzstyle{map}=[draw,shape=NEbox,inner sep=7pt]
\tikzstyle{mapdag}=[draw,shape=SEbox,inner sep=7pt]
\tikzstyle{maptrans}=[draw,shape=SWbox,inner sep=7pt]
\tikzstyle{mapconj}=[draw,shape=NWbox,inner sep=7pt]
\tikzstyle{probs}=[shape=semicircle,fill=gray!40!white,draw=black,shape border rotate=180,minimum width=1.2cm]
\tikzstyle{arrs}=[-latex,font=\small,auto]
\tikzstyle{arrow plain}=[arrs]
\tikzstyle{arrow dashed}=[dashed,arrs]
\tikzstyle{arrow bold}=[very thick,arrs]
\tikzstyle{arrow hide}=[draw=white!0,-]
\tikzstyle{arrow reverse}=[latex-]
\tikzstyle{cdnode}=[]
\tikzstyle{gn}=[dot,fill=green,minimum width=0.25cm,inner sep=0pt]
\tikzstyle{rno}=[dot,fill=red,inner sep=0pt,minimum width=0.25cm]
\tikzstyle{rn}=[dot,fill=pink,inner sep=0pt,minimum width=0.25cm]
\tikzstyle{rc}=[dot,thick,fill=white,draw = red,minimum width=0.3cm,inner sep=0pt]
\tikzstyle{gc}=[dot,thick,fill=white,draw= green,inner sep=0pt,minimum width=0.3cm]
\tikzstyle{bc}=[dot,thick,fill=white,draw= blue,minimum width=0.3cm]
\tikzstyle{label}=[circle,fill=white,minimum width=0.3cm]
\tikzstyle{clocklabel}=[dot,fill=yellow,draw=black,font=\tiny,inner sep=0.75pt]
\tikzstyle{rsn}=[circle split,draw,fill=red,font=\tiny,inner sep=0.75pt]
\tikzstyle{gsn}=[circle split,draw,fill=green,font=\tiny,inner sep=0.75pt]
\tikzstyle{bsn}=[circle split,draw,fill=blue,font=\tiny,inner sep=0.75pt]
\tikzstyle{rsc}=[circle split,thick,draw= red,draw,fill=white,font=\tiny,inner sep=0.75pt]
\tikzstyle{gsc}=[circle split,thick,draw= green,draw,fill=white,font=\tiny,inner sep=0.75pt]
\tikzstyle{bsc}=[circle split,thick,draw= blue,draw,fill=white,font=\tiny,inner sep=0.75pt]
\tikzstyle{cnot}=[fill=white,shape=circle,inner sep=-1.4pt]
\tikzstyle{wire label}=[font=\tiny, auto]
\tikzstyle{cdiag}=[matrix of math nodes, row sep=3em, column sep=3em, text height=1.5ex, text depth=0.25ex,inner sep=0.5em]
\tikzstyle{arrow above}=[transform canvas={yshift=0.5ex}]
\tikzstyle{arrow below}=[transform canvas={yshift=-0.5ex}]
\newtheorem{Th}{Theorem}[section]
\newtheorem{theorem}[Th]{Theorem}
\newtheorem{lemma}[Th]{Lemma}
\newtheorem{corollary}[Th]{Corollary}
\newtheorem{remark}[Th]{Remark}
\newcommand{\vast}{\bBigg@{6.5}}
\newcommand{\vertrule}[1][1ex]{\rule{.4pt}{#1}}
\newcommand{\rddots}{\rotatebox{90}{$\ddots$}}
\title{ Qufinite ZX-calculus: a unified framework of qudit ZX-calculi }
\author{Quanlong Wang\\
Cambridge Quantum/ Quantinuum }
\begin{document}
\date{}\maketitle
\begin{abstract}
ZX-calculus is graphical language for quantum computing which usually focuses on qubits. In this paper, we generalise qubit ZX-calculus to qudit ZX-calculus in any finite dimension by introducing suitable generators, especially a carefully chosen triangle node. As a consequence we  obtain a set of rewriting rules which can be seen as a direct generalisation of qubit rules, and a normal form for any qudit vectors. Based on the qudit ZX-calculi, we propose a graphical formalism called qufinite ZX-calculus as a unified framework for all qudit ZX-calculi, which is universal for finite quantum theory due to a normal form for matrix of any finite size. As a result, it would be interesting to give a fine-grained version of  the diagrammatic reconstruction of finite quantum theory \cite{Selby2021reconstructing} within the framework of qufinite ZX-calculus.

\end{abstract}


  \section{ Introduction}

 ZX-calculus is invented as a graphical language for quantum computing \cite{CoeckeDuncan}, so no wonder it is overwhelmingly concentrated on qubits, i.e., each of its wire in a diagram represents a 2-dimensional system and each diagram corresponds to a matrix of size $2^m\times 2^n$. Mathematically speaking, ZX-calculus is based on a compact closed PROP \cite{BaezCR} represented by Z spiders and X spiders (depicted in green and red respectively in this paper) as main generators as well as  rewriting rules  which are equalities of diagrams composed vertically or horizontally in terms of those generators. The ZX-calculus is called universal if each matrix of $2^m\times 2^n$ can be represented by a ZX diagram \cite{CoeckeDuncan}, and is called complete if any two diagrams corresponding to the same matrix can be rewritten into each other with ZX rules \cite{amarngwanglics, jpvbeyondlics}.

However, qubits are not the whole story. In fact, many  quantum systems naturally exhibit more than two dimensions \cite{Farinholt_2014}, so it would be very useful to have  ZX-calculus for higher dimensional quantum systems, i.e., qudits of dimension $d\geq 2$. Actually, ZX-calculus has been partly generalised to any finite dimension in \cite{Ranchin},  and was even more proved to be universal in \cite{BianWang2}. But that proof of universality is not a constructive one, which means given a matrix it is not easy in general to construct its corresponding diagram, although in principle this can be done. Furthermore, from the preliminary version of qudit ZX-calculus it is non-trivial to find a way that could lead to a proof of completeness. 

On the other hand, qubit ZX-calculus has been proved to be complete via a translation from another graphical calculus called ZW-calculus which is already complete \cite{amarngwanglics, jpvbeyondlics}. Although there exists a qudit version of ZW-calculus which is universal due to a normal form \cite{amar},  it is unclear how completeness for qudit  ZW-calculus could be achieved,  thus makes it unfeasible to obtain completeness of qudit ZX-calculus by the translation method. 
 
 Furthermore, each wire of a qudit ZX diagram represents a $d$-dimensional system, while in finite dimensional quantum theory it is quite common that the involved systems have hybrid dimensions instead of some power of $d$. Therefore, it is desirable to have a unified framework for qudit ZX-calculi in all finite dimensions.   
  
  In this paper, we first generalise qubit ZX-calculus to qudit ZX-calculus in any dimension $d\geq 2$. The generators we give here are different from the previous ones \cite{Ranchin} in that the Z spider has a phase vector composed of complex numbers rather than real numbers,  the generalised Hadamard node and its  adjoint are unnormalised, and a generalised triangle node (in comparing to the one given in \cite{jpvcltlics, amarngwanglics}) and its inverse are added in.  As a consequence, the X spider is also  unnormalised, and most of the qubit rewriting rules as given in \cite{qwangnormalformbit} are generalised to the qudit case for any dimension. Note that some of these qudit rewriting rules have been presented in \cite{qwangslides19} and \cite{cwiconscious}. Most importantly,  the normal for qubit vectors  \cite{qwangnormalformbit}  has been generalised for qudit vectors, which means universality of  qudit ZX-calculus and a feasible approach for proof of completeness (similar to the method for qubit ZX shown in \cite{qwangnormalformbit}).

Then we propose a formalism called qufinite ZX-calculus as a unified framework for qudit ZX-calculi in all finite dimensions.  The key idea is to label each wire with its dimension and add two new generators called dimension-splitter and dimension-binder respectively. This formalism is not a PROP anymore, but still a compact closed category.  With the new generators and a normal form for qudits, we construct a normal form for arbitrary matrix of size $m\times n$. Therefore, the qufinite ZX-calculus is universal for finite dimensional quantum theory.     

 Finally we mention that the generators  (except for the Hadamard node and its adjoint) and the normal form of  the qufinite ZX-calculus can be generalised to be over arbitrary commutative semirings. Hence we can have a qufinite ZX-calculus over commutative semirings which is universal and possibly can be proved to be complete following the method given in  \cite{qwangrsmring}.
 
  
  
   \section{ Generators and rules of qudit ZX-calculus}
   In this paper, $d$ is an integer and $d\geq 2$, $\mathbb C$ is the filed of complex numbers. All the diagrams are read from top to bottom. Similar to the qubit case, qudit ZX-calculus is based on a PROP which can be represented by generators and rewriting rules. 
   
   First we give the generators of qudit ZX-calculus. 
 \begin{table}
\begin{center} 
\begin{tabular}{|r@{~}r@{~}c@{~}c|r@{~}r@{~}c@{~}c|}
\hline

$R_{Z,\overrightarrow{a}}^{(n,m)}$&$:$&$n\to m$ & %
	\beginpgfgraphicnamed{TikZit/generalgreenspiderqdit2}
	\InputIfFileExists{TikZit/generalgreenspiderqdit2.tikz}{}{\input{./figures/TikZit/generalgreenspiderqdit2.tikz}}%
	\endpgfgraphicnamed
  & & & &\\ \hline

$H$&$:$&$1\to 1$ &%
	\beginpgfgraphicnamed{TikZit/HadaDecomSingleslt}
	\InputIfFileExists{TikZit/HadaDecomSingleslt.tikz}{}{\input{./figures/TikZit/HadaDecomSingleslt.tikz}}%
	\endpgfgraphicnamed

 &  $H^{\dagger}$&$:$&$ 1\to 1$& %
	\beginpgfgraphicnamed{TikZit/RGg_Hadad}
	\InputIfFileExists{TikZit/RGg_Hadad.tikz}{}{\input{./figures/TikZit/RGg_Hadad.tikz}}%
	\endpgfgraphicnamed
\\\hline
   $\mathbb I$&$:$&$1\to 1$&%
	\beginpgfgraphicnamed{TikZit/Id}
	\InputIfFileExists{TikZit/Id.tikz}{}{\input{./figures/TikZit/Id.tikz}}%
	\endpgfgraphicnamed
 &   $\sigma$&$:$&$ 2\to 2$& %
	\beginpgfgraphicnamed{TikZit/swap}
	\InputIfFileExists{TikZit/swap.tikz}{}{\input{./figures/TikZit/swap.tikz}}%
	\endpgfgraphicnamed
\\ \hline
   $C_a$&$:$&$ 0\to 2$& %
	\beginpgfgraphicnamed{TikZit/cap}
	\InputIfFileExists{TikZit/cap.tikz}{}{\input{./figures/TikZit/cap.tikz}}%
	\endpgfgraphicnamed
 &$ C_u$&$:$&$ 2\to 0$&%
	\beginpgfgraphicnamed{TikZit/cup}
	\InputIfFileExists{TikZit/cup.tikz}{}{\input{./figures/TikZit/cup.tikz}}%
	\endpgfgraphicnamed
 \\\hline
 $T_g$&$:$&$1\to 1$&%
	\beginpgfgraphicnamed{TikZit/triangle}
	\InputIfFileExists{TikZit/triangle.tikz}{}{\input{./figures/TikZit/triangle.tikz}}%
	\endpgfgraphicnamed
  & $T_g^{-1}$&$:$&$1\to 1$&%
	\beginpgfgraphicnamed{TikZit/triangleinv}
	\InputIfFileExists{TikZit/triangleinv.tikz}{}{\input{./figures/TikZit/triangleinv.tikz}}%
	\endpgfgraphicnamed
 \\\hline
\end{tabular}\caption{ Generators of qudit ZX-calculus, where $m,n\in \mathbb N$, $\protect\overrightarrow{a}=(a_1, \cdots, a_{d-1}), a_i \in \mathbb C. $} \label{qbzxgeneratordit}
\end{center}
\end{table}
\FloatBarrier

 For convenience, we make the following denotations: 
\[
 %
	\beginpgfgraphicnamed{TikZit/spidersdenotedit3}
	\InputIfFileExists{TikZit/spidersdenotedit3.tikz}{}{\input{./figures/TikZit/spidersdenotedit3.tikz}}%
	\endpgfgraphicnamed
 
\]
 where $   \overrightarrow{\alpha}=(\alpha_1, \cdots, \alpha_{d-1}),  \alpha_i \in \mathbb R,  e^{i\overrightarrow{\alpha}}=(e^{i\alpha_1}, \cdots, e^{i\alpha_{d-1}}), \overrightarrow{1}=\overbrace{(1,\cdots,1)}^{d-1}, \overrightarrow{s}=(0,\cdots,0, \frac{1}{d}-1), \overrightarrow{\tau}=(\tau_1, \cdots, \tau_k, \cdots, \tau_{d-1}), \tau_k=k\pi+\frac{k^2\pi}{d}, 1\leq k \leq d-1,  K_j=(j\frac{2\pi}{d}, 2j\frac{2\pi}{d}, \cdots, (d-1)j\frac{2\pi}{d}),  0\leq j \leq d-1$.
 
 The diagrams of the qudit ZX-calculus have the following standard interpretation $ \left\llbracket \cdot \right\rrbracket$:

 \[
 \left\llbracket %
	\beginpgfgraphicnamed{TikZit/generalgreenspiderqdit2}
	\InputIfFileExists{TikZit/generalgreenspiderqdit2.tikz}{}{\input{./figures/TikZit/generalgreenspiderqdit2.tikz}}%
	\endpgfgraphicnamed
 \right\rrbracket=\sum_{j=0}^{d-1}a_j\ket{j}^{\otimes m}\bra{j}^{\otimes n}, a_0=1,  \protect\overrightarrow{a}=(a_1, \cdots, a_{d-1}), a_i \in \mathbb C. 
\]

 \[
\left\llbracket %
	\beginpgfgraphicnamed{TikZit/quditrspiderclassic}
	\InputIfFileExists{TikZit/quditrspiderclassic.tikz}{}{\input{./figures/TikZit/quditrspiderclassic.tikz}}%
	\endpgfgraphicnamed
 \right\rrbracket=\sum_{\substack{0\leq i_1, \cdots, i_m,  j_1, \cdots, j_n\leq d-1\\ i_1+\cdots+ i_m+j\equiv  j_1+\cdots +j_n(mod~ d)}}\ket{i_1, \cdots, i_m}\bra{j_1, \cdots, j_n}, \quad K_j=(j\frac{2\pi}{d}, 2j\frac{2\pi}{d}, \cdots, (d-1)j\frac{2\pi}{d}),  0\leq j \leq d-1,
\]

\[
\left\llbracket%
	\beginpgfgraphicnamed{TikZit/HadaDecomSingleslt}
	\begin{tikzpicture}
	\begin{pgfonlayer}{nodelayer}
		\node [style=H box] (0) at (-0.75, 0) {$H$};
		\node [style=none] (1) at (-0.75, -0.5) {};
		\node [style=none] (2) at (-0.75, 0.5) {};
	\end{pgfonlayer}
	\begin{pgfonlayer}{edgelayer}
		\draw (2.center) to (0);
		\draw (1.center) to (0);
	\end{pgfonlayer}
\end{tikzpicture}}%
	\endpgfgraphicnamed
\right\rrbracket=\sum_{k, j=0}^{d-1}\xi^{jk}\ket{j}\bra{k},  \xi=e^{i\frac{2\pi}{d}}, \quad
\left\llbracket%
	\beginpgfgraphicnamed{TikZit/RGg_Hadad}
	\begin{tikzpicture}
	\begin{pgfonlayer}{nodelayer}
		\node [style={H box}] (0) at (0, -0) {$H^\dagger$};
		\node [style=none] (1) at (0, -0.5) {};
		\node [style=none] (2) at (0, 0.5) {};
	\end{pgfonlayer}
	\begin{pgfonlayer}{edgelayer}
		\draw (2.center) to (0);
		\draw (1.center) to (0);
	\end{pgfonlayer}
\end{tikzpicture}}%
	\endpgfgraphicnamed
\right\rrbracket=\sum_{k, j=0}^{d-1}\bar{\xi}^{jk}\ket{j}\bra{k}, \bar{\xi}=e^{-i\frac{2\pi}{d}},  \]
\[
  \left\llbracket%
	\beginpgfgraphicnamed{TikZit/triangle}
	\begin{tikzpicture}
	\begin{pgfonlayer}{nodelayer}
		\node [style=none] (0) at (0, 0.5) {};
		\node [style=triangle] (1) at (0, 0) {};
		\node [style=none] (2) at (0, -0.5) {};
	\end{pgfonlayer}
	\begin{pgfonlayer}{edgelayer}
		\draw (0.center) to (2.center);
	\end{pgfonlayer}
\end{tikzpicture}}%
	\endpgfgraphicnamed
\right\rrbracket=I_d+\sum_{i=1}^{d-1}\ket{0}\bra{i}, \quad
   \left\llbracket%
	\beginpgfgraphicnamed{TikZit/triangleinv}
	\begin{tikzpicture}
	\begin{pgfonlayer}{nodelayer}
		\node [style=none] (0) at (0.25, 0.25) {-{\scriptsize1}};
		\node [style=triangle] (1) at (0, 0) {};
		\node [style=none] (2) at (0, -0.5) {};
		\node [style=none] (3) at (0, 0.5) {};
	\end{pgfonlayer}
	\begin{pgfonlayer}{edgelayer}
		\draw (3.center) to (2.center);
	\end{pgfonlayer}
\end{tikzpicture}}%
	\endpgfgraphicnamed
\right\rrbracket=I_d-\sum_{i=1}^{d-1}\ket{0}\bra{i},
   \]
   \[
      \left\llbracket%
	\beginpgfgraphicnamed{TikZit/redtaugate}
	\begin{tikzpicture}
	\begin{pgfonlayer}{nodelayer}
		\node [style=none] (0) at (0, -0.5) {};
		\node [style=none] (1) at (0, 0.5) {};
		\node [style=rn] (2) at (0, 0) { ${\scriptstyle  \overrightarrow{\tau}}$ };
	\end{pgfonlayer}
	\begin{pgfonlayer}{edgelayer}
		\draw (2) to (1.center);
		\draw (2) to (0.center);
	\end{pgfonlayer}
\end{tikzpicture}}%
	\endpgfgraphicnamed
\right\rrbracket=\frac{1}{d}\sum_{k,l,n=0}^{d-1}e^{i\tau_l}\xi^{(k-n)l}\ket{k}\bra{n}, \overrightarrow{\tau}=(\tau_1, \cdots, \tau_k, \cdots, \tau_{d-1}), \tau_k=k\pi+\frac{k^2\pi}{d}, 0\leq k \leq d-1,
   \]

\[
   \left\llbracket%
	\beginpgfgraphicnamed{TikZit/Id}
	\begin{tikzpicture}
	\begin{pgfonlayer}{nodelayer}
		\node [style=none] (1) at (0.5, 0.3) {};
		\node [style=none] (2) at (0.5, -0.3) {};
		\node [style=none] (3) at (0.5, -0.5) {};
		\node [style=none] (4) at (0.5, 0.5) {};
	\end{pgfonlayer}
	\begin{pgfonlayer}{edgelayer}
		\draw (1.center) to (2.center);
	\end{pgfonlayer}
\end{tikzpicture}}%
	\endpgfgraphicnamed
\right\rrbracket=I_d=\sum_{j=0}^{d-1}\ket{j}\bra{j},  \quad
 \left\llbracket%
	\beginpgfgraphicnamed{TikZit/swap}
	\InputIfFileExists{TikZit/swap.tikz}{}{\input{./figures/TikZit/swap.tikz}}%
	\endpgfgraphicnamed
\right\rrbracket=\sum_{i, j=0}^{d-1}\ket{ji}\bra{ij},\quad
  \left\llbracket%
	\beginpgfgraphicnamed{TikZit/cap}
	\begin{tikzpicture}
	\begin{pgfonlayer}{nodelayer}
		\node [style=none] (0) at (-0.5, -0.25) {};
		\node [style=none] (1) at (0.5, -0.25) {};
	\end{pgfonlayer}
	\begin{pgfonlayer}{edgelayer}
		\draw [bend left=90, looseness=1.50] (0.center) to (1.center);
	\end{pgfonlayer}
\end{tikzpicture}}%
	\endpgfgraphicnamed
\right\rrbracket=\sum_{j=0}^{d-1}\ket{jj}, \quad
   \left\llbracket%
	\beginpgfgraphicnamed{TikZit/cup}
	\begin{tikzpicture}
	\begin{pgfonlayer}{nodelayer}
		\node [style=none] (0) at (-0.5, 0.25) {};
		\node [style=none] (1) at (0.5, 0.25) {};
	\end{pgfonlayer}
	\begin{pgfonlayer}{edgelayer}
		\draw [bend right=90, looseness=1.50] (0.center) to (1.center);
	\end{pgfonlayer}
\end{tikzpicture}}%
	\endpgfgraphicnamed
\right\rrbracket=\sum_{j=0}^{d-1}\bra{jj},   \quad \left\llbracket%
	\beginpgfgraphicnamed{TikZit/emptysquare}
	\InputIfFileExists{TikZit/emptysquare.tikz}{}{\input{./figures/TikZit/emptysquare.tikz}}%
	\endpgfgraphicnamed
\right\rrbracket=1,
      \]

\[  \llbracket D_1\otimes D_2  \rrbracket =  \llbracket D_1  \rrbracket \otimes  \llbracket  D_2  \rrbracket, \quad 
 \llbracket D_1\circ D_2  \rrbracket =  \llbracket D_1  \rrbracket \circ  \llbracket  D_2  \rrbracket.
  \]
  
 In particular,
 \[
\left\llbracket %
	\beginpgfgraphicnamed{TikZit/quditclassicpoints}
	\begin{tikzpicture}
	\begin{pgfonlayer}{nodelayer}
		\node [style=none] (0) at (0, -0.25) {};
		\node [style=rn] (1) at (0, 0.25) {${\scriptstyle K_{j}}$};
	\end{pgfonlayer}
	\begin{pgfonlayer}{edgelayer}
		\draw (1) to (0.center);
	\end{pgfonlayer}
\end{tikzpicture}}%
	\endpgfgraphicnamed
 \right\rrbracket=\ket{d-j}, \quad\left\llbracket %
	\beginpgfgraphicnamed{TikZit/quditclassiccopoints}
	\begin{tikzpicture}
	\begin{pgfonlayer}{nodelayer}
		\node [style=rn] (0) at (0, -0.25) {${\scriptstyle K_{j}}$};
		\node [style=none] (1) at (0, 0.25) {};
	\end{pgfonlayer}
	\begin{pgfonlayer}{edgelayer}
		\draw (0) to (1.center);
	\end{pgfonlayer}
\end{tikzpicture}}%
	\endpgfgraphicnamed
 \right\rrbracket=\bra{j}, 0\leq j \leq d-1 (\ket{0}=\ket{d}).
\] 
 
   \begin{remark}
The interpretation of the $H$ node is the usual quantum Fourier transform without the scalar $\frac{1}{\sqrt{d}}$ \cite{Nielsen}, and the  $H^{\dagger}$ node is the unnormalised inverse quantum Fourier transform. Also note that $\tau_{d-k}\equiv \tau_k (mod 2\pi)$, i.e.,   $ \overrightarrow{\tau}=\overleftarrow{\tau} $.  
 
In the qubit case, the triangle node has the following interpretation:
 \[
  \left\llbracket%
	\beginpgfgraphicnamed{TikZit/triangle}
	}%
	\endpgfgraphicnamed
\right\rrbracket=\begin{pmatrix}
        1 & 1 \\
        0 & 1
 \end{pmatrix}
  \]
When generalising the triangle node to higher dimensional cases, there could be multiple choices. Here we adopt the following form which turns out to be very useful:
  \[
    \left\llbracket%
	\beginpgfgraphicnamed{TikZit/triangle}
	}%
	\endpgfgraphicnamed
\right\rrbracket=\begin{pmatrix}
      1 & 1 & \cdots & 1  \\
          & 1 &  \cdots & 0 \\
           &  & \ddots & \vdots  \\
         &  &  & 1 
 \end{pmatrix}
  \]
where the elements of the first row and the diagonal are $1$ and the entries in other places are  just $0$.  
 \end{remark}

 \begin{remark}
 As once can see from the standard interpretation of the diagrams, if we add in the red spiders  labelled with $K_j$  while drop off the $H$ node and the $H^{\dagger}$ node from the Table \ref{qbzxgeneratordit} of the qudit generators, then we get the generators of qudit ZX-calculus over arbitrary commutative semirings in any finite dimension, as have been shown in  \cite{cwiconscious}. 
  \end{remark}
  
  Below we show some rules of qudit ZX-calculus in any dimension $d\geq 2$ which are directly generalised from the rules for qubits  \cite{qwangnormalformbit}. 
   \begin{figure}[!h]
\begin{center} 
\[
\quad \qquad\begin{array}{|cccc|}
\hline
	\beginpgfgraphicnamed{TikZit/gengspiderfusedit}
	\InputIfFileExists{TikZit/gengspiderfusedit.tikz}{}{\input{./figures/TikZit/gengspiderfusedit.tikz}}%
	\endpgfgraphicnamed
&(S1) &%
	\beginpgfgraphicnamed{TikZit/s2qudit}
	\InputIfFileExists{TikZit/s2qudit.tikz}{}{\input{./figures/TikZit/s2qudit.tikz}}%
	\endpgfgraphicnamed
 &(S2)\\
	\beginpgfgraphicnamed{TikZit/s3qudit}
	\InputIfFileExists{TikZit/s3qudit.tikz}{}{\input{./figures/TikZit/s3qudit.tikz}}%
	\endpgfgraphicnamed
&(S3) & %
	\beginpgfgraphicnamed{TikZit/rdotaemptydit0}
	\InputIfFileExists{TikZit/rdotaemptydit0.tikz}{}{\input{./figures/TikZit/rdotaemptydit0.tikz}}%
	\endpgfgraphicnamed
  &(Ept) \\
  & &&\\ 
  %
	\beginpgfgraphicnamed{TikZit/b1qudit}
	\InputIfFileExists{TikZit/b1qudit.tikz}{}{\input{./figures/TikZit/b1qudit.tikz}}%
	\endpgfgraphicnamed
&(B1)  & %
	\beginpgfgraphicnamed{TikZit/b2qudit}
	\InputIfFileExists{TikZit/b2qudit.tikz}{}{\input{./figures/TikZit/b2qudit.tikz}}%
	\endpgfgraphicnamed
&(B2)\\ 
    & &&\\ 
	\beginpgfgraphicnamed{TikZit/b3qudit}
	\InputIfFileExists{TikZit/b3qudit.tikz}{}{\input{./figures/TikZit/b3qudit.tikz}}%
	\endpgfgraphicnamed
  &(B3 )& %
	\beginpgfgraphicnamed{TikZit/pimultiplecpdit}
	\InputIfFileExists{TikZit/pimultiplecpdit.tikz}{}{\input{./figures/TikZit/pimultiplecpdit.tikz}}%
	\endpgfgraphicnamed
&(K1) \\
 & &&\\ 
	\beginpgfgraphicnamed{TikZit/k2adit}
	\InputIfFileExists{TikZit/k2adit.tikz}{}{\input{./figures/TikZit/k2adit.tikz}}%
	\endpgfgraphicnamed
  &(K2 )& %
	\beginpgfgraphicnamed{TikZit/euqdit}
	\InputIfFileExists{TikZit/euqdit.tikz}{}{\input{./figures/TikZit/euqdit.tikz}}%
	\endpgfgraphicnamed
 & (EU)\\
 & &&\\ 
	\beginpgfgraphicnamed{TikZit/zerotoreddit0}
	\InputIfFileExists{TikZit/zerotoreddit0.tikz}{}{\input{./figures/TikZit/zerotoreddit0.tikz}}%
	\endpgfgraphicnamed
&(Zer)&%
	\beginpgfgraphicnamed{TikZit/hhdaggerchangedit}
	\InputIfFileExists{TikZit/hhdaggerchangedit.tikz}{}{\input{./figures/TikZit/hhdaggerchangedit.tikz}}%
	\endpgfgraphicnamed
&(H1)\\
%
	\beginpgfgraphicnamed{TikZit/p1sdit2}
	\InputIfFileExists{TikZit/p1sdit2.tikz}{}{\input{./figures/TikZit/p1sdit2.tikz}}%
	\endpgfgraphicnamed
&(P1)&%
	\beginpgfgraphicnamed{TikZit/dcomwtha0}
	\InputIfFileExists{TikZit/dcomwtha0.tikz}{}{\input{./figures/TikZit/dcomwtha0.tikz}}%
	\endpgfgraphicnamed
&(D1)\\
	\beginpgfgraphicnamed{TikZit/scalardit}
	\InputIfFileExists{TikZit/scalardit.tikz}{}{\input{./figures/TikZit/scalardit.tikz}}%
	\endpgfgraphicnamed
&(Sca) &  &\\ 
  		  		\hline  
  		\end{array}\]      
  	\end{center}
  		\caption{Qudit ZX-calculus rules I, where $\protect\overrightarrow{a}=(a_1,\cdots, a_{d-1}), \protect\overleftarrow{a}=(a_{d-1}, \cdots, a_1),  \protect\overrightarrow{a^{\prime}}=(\frac{a_{1-j}}{a_{d-j}},\cdots, \frac{a_{d-1-j}}{a_{d-j}}), \protect\overrightarrow{a_{-j}}=(0, \cdots, 0, a_{d-j}-1),  j \in \{0, 1,\cdots, d-1\}, a_0=a_d=1, \protect\overrightarrow{b}=(b_1,\cdots, b_{d-1}), \protect\overrightarrow{ab}=(a_1b_1,\cdots, a_{d-1}b_{d-1}), a_k, b_k\in \mathbb C, k \in \{1,\cdots, d-1\}, m  \in  \mathbb N, \protect\overrightarrow{\tau}=(\tau_1, \cdots, \tau_k, \cdots, \tau_{d-1}), \tau_k=k\pi+\frac{k^2\pi}{d}, 0\leq k \leq d-1; K_j=(j\frac{2\pi}{d}, 2j\frac{2\pi}{d}, \cdots, (d-1)j\frac{2\pi}{d}),  0\leq j \leq d-1, \protect\overrightarrow{0}=\protect\overbrace{(0,\cdots,0)}^{d-1},  \protect\overrightarrow{\sigma_i}=(\protect\underbrace{0,\cdots, 0, \sum_{k=1}^{d-1} a_k}_{i}, \cdots, 0),  i \in \{ 1,\cdots, d-1\};  .$ We assume that  all the upside-down flipped version of these rules still hold.
}\label{quditrules1}
  \end{figure}
 \FloatBarrier

  \begin{figure}[!h]
\begin{center}
\[
\quad \qquad\begin{array}{|cccc|}
\hline
	\beginpgfgraphicnamed{TikZit/triangleocopydit}
	\InputIfFileExists{TikZit/triangleocopydit.tikz}{}{\input{./figures/TikZit/triangleocopydit.tikz}}%
	\endpgfgraphicnamed
 &(Bs0) &%
	\beginpgfgraphicnamed{TikZit/trianglepicopydit}
	\InputIfFileExists{TikZit/trianglepicopydit.tikz}{}{\input{./figures/TikZit/trianglepicopydit.tikz}}%
	\endpgfgraphicnamed
&(Bsj)\\
	\beginpgfgraphicnamed{TikZit/sucdit}
	\InputIfFileExists{TikZit/sucdit.tikz}{}{\input{./figures/TikZit/sucdit.tikz}}%
	\endpgfgraphicnamed
&(Suc)& %
	\beginpgfgraphicnamed{TikZit/triangleinvers}
	\InputIfFileExists{TikZit/triangleinvers.tikz}{}{\input{./figures/TikZit/triangleinvers.tikz}}%
	\endpgfgraphicnamed
  & (Inv) \\
   & &&\\
	\beginpgfgraphicnamed{TikZit/phasecopydit}
	\InputIfFileExists{TikZit/phasecopydit.tikz}{}{\input{./figures/TikZit/phasecopydit.tikz}}%
	\endpgfgraphicnamed
&(Pcy)& %
	\beginpgfgraphicnamed{TikZit/additiondit}
	\InputIfFileExists{TikZit/additiondit.tikz}{}{\input{./figures/TikZit/additiondit.tikz}}%
	\endpgfgraphicnamed
&(AD) \\
	\beginpgfgraphicnamed{TikZit/wsymetrydit}
	\InputIfFileExists{TikZit/wsymetrydit.tikz}{}{\input{./figures/TikZit/wsymetrydit.tikz}}%
	\endpgfgraphicnamed
&(Sym) &  %
	\beginpgfgraphicnamed{TikZit/associatedit}
	\InputIfFileExists{TikZit/associatedit.tikz}{}{\input{./figures/TikZit/associatedit.tikz}}%
	\endpgfgraphicnamed
 &(Aso)\\ 
   & &&\\
	\beginpgfgraphicnamed{TikZit/whopfdit}
	\InputIfFileExists{TikZit/whopfdit.tikz}{}{\input{./figures/TikZit/whopfdit.tikz}}%
	\endpgfgraphicnamed
&(Whf) &  %
	\beginpgfgraphicnamed{TikZit/brkdit}
	\InputIfFileExists{TikZit/brkdit.tikz}{}{\input{./figures/TikZit/brkdit.tikz}}%
	\endpgfgraphicnamed
 &(Brk)\\ 
	\beginpgfgraphicnamed{TikZit/dboxtrianglekt}
	\InputIfFileExists{TikZit/dboxtrianglekt.tikz}{}{\input{./figures/TikZit/dboxtrianglekt.tikz}}%
	\endpgfgraphicnamed
&(DT) &  %
	\beginpgfgraphicnamed{TikZit/2trianglebw2gnlmdm}
	\InputIfFileExists{TikZit/2trianglebw2gnlmdm.tikz}{}{\input{./figures/TikZit/2trianglebw2gnlmdm.tikz}}%
	\endpgfgraphicnamed
 &(Tre)\\ 
	\beginpgfgraphicnamed{TikZit/trianglekj}
	\InputIfFileExists{TikZit/trianglekj.tikz}{}{\input{./figures/TikZit/trianglekj.tikz}}%
	\endpgfgraphicnamed
&(TKj) &    %
	\beginpgfgraphicnamed{TikZit/trianglecopyvariant1}
	\InputIfFileExists{TikZit/trianglecopyvariant1.tikz}{}{\input{./figures/TikZit/trianglecopyvariant1.tikz}}%
	\endpgfgraphicnamed
&(Brk2) \\ 
  		  		\hline
  		\end{array}\]  
  	\end{center}
  	\caption{Qudit ZX-calculus rules II,  where $\protect\overrightarrow{1}=\protect\overbrace{(1,\cdots,1)}^{d-1};\protect\overrightarrow{a}=(a_1,\cdots, a_{d-1}); \protect\overrightarrow{b}=(b_1,\cdots, b_{d-1}); a_k, b_k\in  \mathbb C, k \in \{1,\cdots, d-1\}; V_j=\protect\overbrace{(\protect\underbrace{0,\cdots, 1}_{d-j}, \cdots, 0)}^{d-1}; j \in \{ 1,\cdots, d-1\}.$}\label{quditrules2}
  \end{figure}
 \FloatBarrier
 \begin{remark}
The rule (EU) given in Figure \ref{quditrules1}  was essentially found (without scalars) by KangFeng Ng and the author as a generalisation of the Euler decomposition of the Hadamard gate. This rule has been reported in several talks, e.g., in QPL 2019 \cite{qwangslides19}.  
 \end{remark}

   \section{Properties of qudit ZX-calculus}
   In this section, we show that given the above generators and rewriting rules of qudit ZX-calculus, what kind of properties it would have. Theses properties will be presented in terms of lemmas, corollaries and propositions.  
   
   First, we show that why the transpose of the triangle node is well-defined.
    \begin{lemma}\label{dtriangleequvditlm}
	\beginpgfgraphicnamed{TikZit/dtriangleequvdit}
	\InputIfFileExists{TikZit/dtriangleequvdit.tikz}{}{\input{./figures/TikZit/dtriangleequvdit.tikz}}%
	\endpgfgraphicnamed

   \end{lemma}
 \begin{proof}
\[    %
	\beginpgfgraphicnamed{TikZit/dtriangleequvditprf}
	\InputIfFileExists{TikZit/dtriangleequvditprf.tikz}{}{\input{./figures/TikZit/dtriangleequvditprf.tikz}}%
	\endpgfgraphicnamed
\]
\end{proof}

   \begin{lemma}\label{controltrianleditlm}
\[    %
	\beginpgfgraphicnamed{TikZit/controltrianledit}
	\InputIfFileExists{TikZit/controltrianledit.tikz}{}{\input{./figures/TikZit/controltrianledit.tikz}}%
	\endpgfgraphicnamed
\]
   \end{lemma}
 \begin{proof}
\[    %
	\beginpgfgraphicnamed{TikZit/controltrianleditprf}
	\InputIfFileExists{TikZit/controltrianleditprf.tikz}{}{\input{./figures/TikZit/controltrianleditprf.tikz}}%
	\endpgfgraphicnamed
\]
  The second equality can be proved similarly. 
\end{proof} 
 This lemma justifies the usage of a triangle node on a horizontal wire as used in rule (Brk2).

      \begin{lemma}\label{kjgalm}
  \[  %
	\beginpgfgraphicnamed{TikZit/kjga}
	\InputIfFileExists{TikZit/kjga.tikz}{}{\input{./figures/TikZit/kjga.tikz}}%
	\endpgfgraphicnamed
\]
    where $ \overrightarrow{a}=(a_1,\cdots, a_{d-1}),  j \in \{ 1,\cdots, d-1\}.$
   \end{lemma}
     \begin{proof}
      \[  %
	\beginpgfgraphicnamed{TikZit/kjgaprf}
	\InputIfFileExists{TikZit/kjgaprf.tikz}{}{\input{./figures/TikZit/kjgaprf.tikz}}%
	\endpgfgraphicnamed
\]
      \end{proof}
      
     \begin{lemma}\label{zeroemptyditlm}
	\beginpgfgraphicnamed{TikZit/zeroemptydit}
	\InputIfFileExists{TikZit/zeroemptydit.tikz}{}{\input{./figures/TikZit/zeroemptydit.tikz}}%
	\endpgfgraphicnamed

   \end{lemma}  
         \begin{proof}
      \[  %
	\beginpgfgraphicnamed{TikZit/zeroemptyditprf}
	\InputIfFileExists{TikZit/zeroemptyditprf.tikz}{}{\input{./figures/TikZit/zeroemptyditprf.tikz}}%
	\endpgfgraphicnamed
\]
      \end{proof}

    \begin{lemma}\label{scalarmultiplydlm}
  \[  %
	\beginpgfgraphicnamed{TikZit/scalarmultiplyd2}
	\InputIfFileExists{TikZit/scalarmultiplyd2.tikz}{}{\input{./figures/TikZit/scalarmultiplyd2.tikz}}%
	\endpgfgraphicnamed
\]
    where $\overrightarrow{s}=(0,\cdots,0, \frac{1}{d}-1)$. 
   \end{lemma}  
   
      \begin{proof}
      \[  %
	\beginpgfgraphicnamed{TikZit/scalarmultiplyd2prf}
	\InputIfFileExists{TikZit/scalarmultiplyd2prf.tikz}{}{\input{./figures/TikZit/scalarmultiplyd2prf.tikz}}%
	\endpgfgraphicnamed
\]
      \end{proof}
      
         \begin{lemma}\label{scalargeneralmultlm}
 \[   %
	\beginpgfgraphicnamed{TikZit/scalargeneralmult}
	\InputIfFileExists{TikZit/scalargeneralmult.tikz}{}{\input{./figures/TikZit/scalargeneralmult.tikz}}%
	\endpgfgraphicnamed
\]
   \end{lemma}  
       \begin{proof}
      \[  %
	\beginpgfgraphicnamed{TikZit/scalargeneralmultprf}
	\InputIfFileExists{TikZit/scalargeneralmultprf.tikz}{}{\input{./figures/TikZit/scalargeneralmultprf.tikz}}%
	\endpgfgraphicnamed
\]
      \end{proof}

         \begin{corollary}\label{dboxequalhlm}
  \[  %
	\beginpgfgraphicnamed{TikZit/dboxequalh}
	\InputIfFileExists{TikZit/dboxequalh.tikz}{}{\input{./figures/TikZit/dboxequalh.tikz}}%
	\endpgfgraphicnamed
\]
    where $\overrightarrow{s}=(0,\cdots,0, \frac{1}{d}-1)$. 
   \end{corollary}
      This follows directly from lemma \ref{scalarmultiplydlm} and the definition of the D box.
      
    \begin{lemma}\label{hilm}
	\beginpgfgraphicnamed{TikZit/h1scalar}
	\InputIfFileExists{TikZit/h1scalar.tikz}{}{\input{./figures/TikZit/h1scalar.tikz}}%
	\endpgfgraphicnamed

   \end{lemma}
    \begin{proof}
      \[  %
	\beginpgfgraphicnamed{TikZit/h1scalarprf}
	\InputIfFileExists{TikZit/h1scalarprf.tikz}{}{\input{./figures/TikZit/h1scalarprf.tikz}}%
	\endpgfgraphicnamed
\]
      \end{proof}

         \begin{corollary}\label{redspiderforgrlm}
  \[  %
	\beginpgfgraphicnamed{TikZit/redspiderforgr}
	\InputIfFileExists{TikZit/redspiderforgr.tikz}{}{\input{./figures/TikZit/redspiderforgr.tikz}}%
	\endpgfgraphicnamed
\]
    where $\overrightarrow{t}=(0,\cdots,0, \frac{1}{d^{m+n-1}}-1),  j \in \{ 0,1,\cdots, d-1\}$. 
   \end{corollary}
      This follows directly from lemma \ref{scalarmultiplydlm}, (H1) and the definition of the red spider, we also denote the equality as (H).  
   
The X spider fusion rule can also be derived.
    \begin{lemma}\label{s4lm}
\[    %
	\beginpgfgraphicnamed{TikZit/redspider0pfusedit2}
	\InputIfFileExists{TikZit/redspider0pfusedit2.tikz}{}{\input{./figures/TikZit/redspider0pfusedit2.tikz}}%
	\endpgfgraphicnamed
\]
    
   \end{lemma}   
  The proof follows directly from (S1), (H) and lemma \ref{scalarmultiplydlm} and\ref{hilm}. 
   
 Note that there is just one edge connected with two X spiders when they are fusing. How about if there are more than one edges appeared?

    \begin{lemma}\label{redspidermwiresfuseditlm}
  \[  %
	\beginpgfgraphicnamed{TikZit/redspidermwiresfusedit}
	\InputIfFileExists{TikZit/redspidermwiresfusedit.tikz}{}{\input{./figures/TikZit/redspidermwiresfusedit.tikz}}%
	\endpgfgraphicnamed
\]
      \end{lemma} 
   The proof is similar to lemma \ref{s4lm}.

 \begin{lemma}\label{dboxsquarelm}
	\beginpgfgraphicnamed{TikZit/dboxsquare}
	\InputIfFileExists{TikZit/dboxsquare.tikz}{}{\input{./figures/TikZit/dboxsquare.tikz}}%
	\endpgfgraphicnamed

   \end{lemma}
     \begin{proof}
      \[  %
	\beginpgfgraphicnamed{TikZit/dboxsquareprf}
	\InputIfFileExists{TikZit/dboxsquareprf.tikz}{}{\input{./figures/TikZit/dboxsquareprf.tikz}}%
	\endpgfgraphicnamed
\]
      \end{proof}
   
    \begin{lemma}\label{dboxgcopylm}
 \[   %
	\beginpgfgraphicnamed{TikZit/dboxgcopy}
	\InputIfFileExists{TikZit/dboxgcopy.tikz}{}{\input{./figures/TikZit/dboxgcopy.tikz}}%
	\endpgfgraphicnamed
\]
   \end{lemma}
     \begin{proof}
     \[  %
	\beginpgfgraphicnamed{TikZit/dboxgcopyprf}
	\InputIfFileExists{TikZit/dboxgcopyprf.tikz}{}{\input{./figures/TikZit/dboxgcopyprf.tikz}}%
	\endpgfgraphicnamed
\]
        The other equalities can be proved similarly. 
      \end{proof}
      
     \begin{lemma}\label{dboxgdotlm}
 \[   %
	\beginpgfgraphicnamed{TikZit/dboxgdot}
	\InputIfFileExists{TikZit/dboxgdot.tikz}{}{\input{./figures/TikZit/dboxgdot.tikz}}%
	\endpgfgraphicnamed
\]
   \end{lemma}
    \begin{proof}
     \[  %
	\beginpgfgraphicnamed{TikZit/dboxgdotprf}
	\InputIfFileExists{TikZit/dboxgdotprf.tikz}{}{\input{./figures/TikZit/dboxgdotprf.tikz}}%
	\endpgfgraphicnamed
\]
      The second equality can be proved similarly. 
      \end{proof}

        \begin{lemma}\label{dboxonkjlm}
 \[   %
	\beginpgfgraphicnamed{TikZit/dboxonkjgt}
	\InputIfFileExists{TikZit/dboxonkjgt.tikz}{}{\input{./figures/TikZit/dboxonkjgt.tikz}}%
	\endpgfgraphicnamed
\]
 where  $j \in \{ 0,1,\cdots, d-1\}$.
   \end{lemma}

        \begin{lemma}\label{dboxonkjlm}
 \[   %
	\beginpgfgraphicnamed{TikZit/dboxonkj}
	\InputIfFileExists{TikZit/dboxonkj.tikz}{}{\input{./figures/TikZit/dboxonkj.tikz}}%
	\endpgfgraphicnamed
\]
 where  $j \in \{ 0,1,\cdots, d-1\}$.
   \end{lemma}   
     \begin{proof}
     \[  %
	\beginpgfgraphicnamed{TikZit/dboxonkjprf}
	\InputIfFileExists{TikZit/dboxonkjprf.tikz}{}{\input{./figures/TikZit/dboxonkjprf.tikz}}%
	\endpgfgraphicnamed
\]
      \end{proof}

      \begin{lemma}\label{dualiserslm}
	\beginpgfgraphicnamed{TikZit/dualisers}
	\InputIfFileExists{TikZit/dualisers.tikz}{}{\input{./figures/TikZit/dualisers.tikz}}%
	\endpgfgraphicnamed

   \end{lemma}
     \begin{proof}
     \[  %
	\beginpgfgraphicnamed{TikZit/dualisersprf1}
	\InputIfFileExists{TikZit/dualisersprf1.tikz}{}{\input{./figures/TikZit/dualisersprf1.tikz}}%
	\endpgfgraphicnamed
\]
      \end{proof}
      
   \begin{lemma}\label{hadslidegnlm}
\[    %
	\beginpgfgraphicnamed{TikZit/hadslidegn}
	\InputIfFileExists{TikZit/hadslidegn.tikz}{}{\input{./figures/TikZit/hadslidegn.tikz}}%
	\endpgfgraphicnamed
\]
   \end{lemma}    
       \begin{proof}
     \[  %
	\beginpgfgraphicnamed{TikZit/hadslidegnprf}
	\InputIfFileExists{TikZit/hadslidegnprf.tikz}{}{\input{./figures/TikZit/hadslidegnprf.tikz}}%
	\endpgfgraphicnamed
\]
     The other equalities can be proved similarly. 
      \end{proof}
  
     \begin{corollary}\label{dboxslidegrnlm}
	\beginpgfgraphicnamed{TikZit/dboxslidegrn}
	\InputIfFileExists{TikZit/dboxslidegrn.tikz}{}{\input{./figures/TikZit/dboxslidegrn.tikz}}%
	\endpgfgraphicnamed

   \end{corollary}    
   \begin{proof}
     \[  %
	\beginpgfgraphicnamed{TikZit/dboxslidegrnprf}
	\InputIfFileExists{TikZit/dboxslidegrnprf.tikz}{}{\input{./figures/TikZit/dboxslidegrnprf.tikz}}%
	\endpgfgraphicnamed
\]
      The second equality can be proved similarly. 
      \end{proof}

  \begin{corollary}\label{rkjslidegnlm}
\[    %
	\beginpgfgraphicnamed{TikZit/rkjslidegn}
	\InputIfFileExists{TikZit/rkjslidegn.tikz}{}{\input{./figures/TikZit/rkjslidegn.tikz}}%
	\endpgfgraphicnamed
\]
where $ j \in \{ 0,1,\cdots, d-1\}$. 
   \end{corollary}    
     \begin{proof}
     \[  %
	\beginpgfgraphicnamed{TikZit/rkjslidegnprf}
	\InputIfFileExists{TikZit/rkjslidegnprf.tikz}{}{\input{./figures/TikZit/rkjslidegnprf.tikz}}%
	\endpgfgraphicnamed
\]
      \end{proof}

      \begin{lemma}\label{swaprgcapcupditlm}
\[    %
	\beginpgfgraphicnamed{TikZit/swaprgcapcupdit}
	\InputIfFileExists{TikZit/swaprgcapcupdit.tikz}{}{\input{./figures/TikZit/swaprgcapcupdit.tikz}}%
	\endpgfgraphicnamed
\]
 \end{lemma}         
 
    \begin{proof}
    We only prove the first equality, the others can be proved similarly.
     \[  %
	\beginpgfgraphicnamed{TikZit/swaprgcapcupditprf}
	\InputIfFileExists{TikZit/swaprgcapcupditprf.tikz}{}{\input{./figures/TikZit/swaprgcapcupditprf.tikz}}%
	\endpgfgraphicnamed
\]
      \end{proof}

     \begin{lemma}\label{grconnectkslm}
\[    %
	\beginpgfgraphicnamed{TikZit/grconnectks}
	\InputIfFileExists{TikZit/grconnectks.tikz}{}{\input{./figures/TikZit/grconnectks.tikz}}%
	\endpgfgraphicnamed
\]

   \end{lemma}      
      \begin{proof}
   \[  %
	\beginpgfgraphicnamed{TikZit/grconnectksprf}
	\InputIfFileExists{TikZit/grconnectksprf.tikz}{}{\input{./figures/TikZit/grconnectksprf.tikz}}%
	\endpgfgraphicnamed
\]
where    $\overrightarrow{s}=(0,\cdots,0, \frac{1}{d}-1), \overrightarrow{t_1}=(0,\cdots,0, \frac{1}{d^{d-1}}-1)$.
      \end{proof} 

Next we show that it doesn't make sense anymore to draw a horizontal line between Z spider and X spider in higher dimensional ZX-calculus ($d\geq 3$), which would be one of the main difference in comparison to qubit ZX-calculus.        
\begin{lemma}\label{controlnotslideditlm}
\[    %
	\beginpgfgraphicnamed{TikZit/controlnotslidedit}
	\InputIfFileExists{TikZit/controlnotslidedit.tikz}{}{\input{./figures/TikZit/controlnotslidedit.tikz}}%
	\endpgfgraphicnamed
\]
   \end{lemma}      
      
  \begin{proof}
We only prove the first equality, the others can be proved similarly.
\[    %
	\beginpgfgraphicnamed{TikZit/controlnotslideditprf}
	\InputIfFileExists{TikZit/controlnotslideditprf.tikz}{}{\input{./figures/TikZit/controlnotslideditprf.tikz}}%
	\endpgfgraphicnamed
\]
\end{proof}
     \begin{remark}
     With lemma \ref{controlnotslideditlm}, now we can use cap ($C_a$) and cup ($C_u$) to transpose simultaneously the diagrams on each side of the rules listed in Figure  \ref{quditrules2}, so that the upside-down flipped version of those rules still hold, except the rule  (Bsj) which has the form %
	\beginpgfgraphicnamed{TikZit/trianglepicopyditflip}
	\InputIfFileExists{TikZit/trianglepicopyditflip.tikz}{}{\input{./figures/TikZit/trianglepicopyditflip.tikz}}%
	\endpgfgraphicnamed
~ after transpose. We will use the same name for the flipped version of those rules.  
 \end{remark}
 
\begin{lemma}\label{slidecupditlm}
\[    %
	\beginpgfgraphicnamed{TikZit/slidecupdit}
	\InputIfFileExists{TikZit/slidecupdit.tikz}{}{\input{./figures/TikZit/slidecupdit.tikz}}%
	\endpgfgraphicnamed
\]
 where $\overrightarrow{a}=(a_1,\cdots, a_{d-1}), \overleftarrow{a}=(a_{d-1}, \cdots, a_1)$.
   \end{lemma}      
  \begin{proof}
\[    %
	\beginpgfgraphicnamed{TikZit/slidecupditprf}
	\InputIfFileExists{TikZit/slidecupditprf.tikz}{}{\input{./figures/TikZit/slidecupditprf.tikz}}%
	\endpgfgraphicnamed
\]
The second equality can be proved similarly as for the first one.
\end{proof}

  \begin{lemma}\label{copyvarsditlm}
\[    %
	\beginpgfgraphicnamed{TikZit/copyvarsdit}
	\InputIfFileExists{TikZit/copyvarsdit.tikz}{}{\input{./figures/TikZit/copyvarsdit.tikz}}%
	\endpgfgraphicnamed
\]
   \end{lemma}      
   \begin{proof}
\[    %
	\beginpgfgraphicnamed{TikZit/copyvarsditprf}
	\InputIfFileExists{TikZit/copyvarsditprf.tikz}{}{\input{./figures/TikZit/copyvarsditprf.tikz}}%
	\endpgfgraphicnamed
\]
\end{proof}  

 \begin{lemma}\label{spider0tordotslm}
 \[   %
	\beginpgfgraphicnamed{TikZit/spider0tordots}
	\InputIfFileExists{TikZit/spider0tordots.tikz}{}{\input{./figures/TikZit/spider0tordots.tikz}}%
	\endpgfgraphicnamed
\]
   \end{lemma}  
  \begin{proof}
\[ %
	\beginpgfgraphicnamed{TikZit/spider0tordotsprf}
	\InputIfFileExists{TikZit/spider0tordotsprf.tikz}{}{\input{./figures/TikZit/spider0tordotsprf.tikz}}%
	\endpgfgraphicnamed
  \]
  \end{proof}

  \begin{lemma}\label{hopfditlm}
\[    %
	\beginpgfgraphicnamed{TikZit/hopfdit}
	\InputIfFileExists{TikZit/hopfdit.tikz}{}{\input{./figures/TikZit/hopfdit.tikz}}%
	\endpgfgraphicnamed
  ~~(Hopf)\]
   \end{lemma}  
  \begin{proof}
\[    %
	\beginpgfgraphicnamed{TikZit/hopfditprf}
	\InputIfFileExists{TikZit/hopfditprf.tikz}{}{\input{./figures/TikZit/hopfditprf.tikz}}%
	\endpgfgraphicnamed
\]
The second equality can be proved similarly.
\end{proof}

\begin{corollary}\label{redgreenmchangelm}
\[    %
	\beginpgfgraphicnamed{TikZit/redgreenmchange}
	\InputIfFileExists{TikZit/redgreenmchange.tikz}{}{\input{./figures/TikZit/redgreenmchange.tikz}}%
	\endpgfgraphicnamed
\]
where $1\leq k \leq d-1$.
   \end{corollary}            
\begin{proof}
\[    %
	\beginpgfgraphicnamed{TikZit/redgreenmchangeprf}
	\InputIfFileExists{TikZit/redgreenmchangeprf.tikz}{}{\input{./figures/TikZit/redgreenmchangeprf.tikz}}%
	\endpgfgraphicnamed
\]
\end{proof}      
      

 \begin{lemma}\label{triangleonreddotlm}
	\beginpgfgraphicnamed{TikZit/triangleonreddot}
	\InputIfFileExists{TikZit/triangleonreddot.tikz}{}{\input{./figures/TikZit/triangleonreddot.tikz}}%
	\endpgfgraphicnamed
 
 \end{lemma}  
    \begin{proof}
	\beginpgfgraphicnamed{TikZit/triangleonreddotprfdit}
	\InputIfFileExists{TikZit/triangleonreddotprfdit.tikz}{}{\input{./figures/TikZit/triangleonreddotprfdit.tikz}}%
	\endpgfgraphicnamed
  
   \end{proof}  

      \begin{lemma}\label{sucinvditlm}
 \[   %
	\beginpgfgraphicnamed{TikZit/sucinvdit}
	\InputIfFileExists{TikZit/sucinvdit.tikz}{}{\input{./figures/TikZit/sucinvdit.tikz}}%
	\endpgfgraphicnamed
\]
   \end{lemma}
    \begin{proof}
	\beginpgfgraphicnamed{TikZit/sucinvditprf}
	\InputIfFileExists{TikZit/sucinvditprf.tikz}{}{\input{./figures/TikZit/sucinvditprf.tikz}}%
	\endpgfgraphicnamed
  
   \end{proof}  

  \begin{lemma}\label{Hopftdit}
  \begin{equation*}\label{TR4geq}
	\beginpgfgraphicnamed{TikZit/tr4g2}
	\InputIfFileExists{TikZit/tr4g2.tikz}{}{\input{./figures/TikZit/tr4g2.tikz}}%
	\endpgfgraphicnamed
 
   \end{equation*}
    \end{lemma}
 \begin{proof}
 \[%
	\beginpgfgraphicnamed{TikZit/tr4gprf2dit}
	\InputIfFileExists{TikZit/tr4gprf2dit.tikz}{}{\input{./figures/TikZit/tr4gprf2dit.tikz}}%
	\endpgfgraphicnamed
\] 
  The second equality can be proved similarly. 
  \end{proof}  
  
     \begin{lemma}\label{Hopfgtr}
  \begin{equation*}
	\beginpgfgraphicnamed{TikZit/trianglehopfgreen2}
	\InputIfFileExists{TikZit/trianglehopfgreen2.tikz}{}{\input{./figures/TikZit/trianglehopfgreen2.tikz}}%
	\endpgfgraphicnamed

  \end{equation*}
    \end{lemma}
 \begin{proof}
\[ %
	\beginpgfgraphicnamed{TikZit/trianglehopfgreenprfdit}
	\InputIfFileExists{TikZit/trianglehopfgreenprfdit.tikz}{}{\input{./figures/TikZit/trianglehopfgreenprfdit.tikz}}%
	\endpgfgraphicnamed
  \]
 The second equality can be proved similarly. 
  \end{proof}

       \begin{lemma}\label{trianglehopfditlm}
	\beginpgfgraphicnamed{TikZit/trianglehopfdit}
	\InputIfFileExists{TikZit/trianglehopfdit.tikz}{}{\input{./figures/TikZit/trianglehopfdit.tikz}}%
	\endpgfgraphicnamed

   \end{lemma}
 \begin{proof}
\[    %
	\beginpgfgraphicnamed{TikZit/trianglehopfditprf}
	\InputIfFileExists{TikZit/trianglehopfditprf.tikz}{}{\input{./figures/TikZit/trianglehopfditprf.tikz}}%
	\endpgfgraphicnamed
\]
  The second equality can be proved similarly. 
\end{proof}    
     
 \begin{corollary}\label{trianglehopfdit2lm}
\[    %
	\beginpgfgraphicnamed{TikZit/trianglehopfdit2}
	\InputIfFileExists{TikZit/trianglehopfdit2.tikz}{}{\input{./figures/TikZit/trianglehopfdit2.tikz}}%
	\endpgfgraphicnamed
\]
   \end{corollary}           
  
    \begin{lemma}\label{cnotlikemovelm}
\[    %
	\beginpgfgraphicnamed{TikZit/cnotlikemove}
	\InputIfFileExists{TikZit/cnotlikemove.tikz}{}{\input{./figures/TikZit/cnotlikemove.tikz}}%
	\endpgfgraphicnamed
\]
   \end{lemma}
    \begin{proof}
\[    %
	\beginpgfgraphicnamed{TikZit/cnotlikemoveprf}
	\InputIfFileExists{TikZit/cnotlikemoveprf.tikz}{}{\input{./figures/TikZit/cnotlikemoveprf.tikz}}%
	\endpgfgraphicnamed
\]
  The second and the third equalities can be proved similarly. 
\end{proof}

 \begin{lemma}
\[ %
	\beginpgfgraphicnamed{TikZit/trianglecopylr}
	\InputIfFileExists{TikZit/trianglecopylr.tikz}{}{\input{./figures/TikZit/trianglecopylr.tikz}}%
	\endpgfgraphicnamed
 \]
  \end{lemma}   
 \begin{proof}
 The second equality follows directly from rules (Brk2) and (Inv). We only prove the first equality.
 \[%
	\beginpgfgraphicnamed{TikZit/trianglecopylrprfdit}
	\InputIfFileExists{TikZit/trianglecopylrprfdit.tikz}{}{\input{./figures/TikZit/trianglecopylrprfdit.tikz}}%
	\endpgfgraphicnamed
  \]
  \end{proof}     
 
If we do partial transpose (bending wires) on both sides of the rule (Brk2), then we get
\begin{lemma}\label{brk2transpose}
\[ %
	\beginpgfgraphicnamed{TikZit/trianglecopyvariant2}
	\InputIfFileExists{TikZit/trianglecopyvariant2.tikz}{}{\input{./figures/TikZit/trianglecopyvariant2.tikz}}%
	\endpgfgraphicnamed
\]
 \end{lemma}

   \begin{lemma}\label{trianglecirc2ditlm}
 \[%
	\beginpgfgraphicnamed{TikZit/trianglecirc2}
	\InputIfFileExists{TikZit/trianglecirc2.tikz}{}{\input{./figures/TikZit/trianglecirc2.tikz}}%
	\endpgfgraphicnamed
\]
 \end{lemma}
 
  \begin{proof}
$$  %
	\beginpgfgraphicnamed{TikZit/trianglecopyvariant3dit}
	\InputIfFileExists{TikZit/trianglecopyvariant3dit.tikz}{}{\input{./figures/TikZit/trianglecopyvariant3dit.tikz}}%
	\endpgfgraphicnamed
 $$
The other part of the equality can be obtained by symmetry. 
 \end{proof}
 
    \begin{lemma}\label{genrealaiitionditlm}
\[    %
	\beginpgfgraphicnamed{TikZit/genrealaiitiondit}
	\InputIfFileExists{TikZit/genrealaiitiondit.tikz}{}{\input{./figures/TikZit/genrealaiitiondit.tikz}}%
	\endpgfgraphicnamed
\]
   \end{lemma}
    \begin{proof}
 \[%
	\beginpgfgraphicnamed{TikZit/genrealaiitionditprf}
	\InputIfFileExists{TikZit/genrealaiitionditprf.tikz}{}{\input{./figures/TikZit/genrealaiitionditprf.tikz}}%
	\endpgfgraphicnamed
  \]
  \end{proof}    
 
   \begin{lemma} \label{equivalentaddrulensditlm}
   Let $\overrightarrow{a}=(a_1,\cdots, a_{d-1}), \overrightarrow{b}=(b_1,\cdots, b_{d-1}), a_k, b_k\in  \mathbb C, k \in \{1,\cdots, d-1\}. $
   \begin{equation*}
	\beginpgfgraphicnamed{TikZit/equivalentaddrulensdit}
	\InputIfFileExists{TikZit/equivalentaddrulensdit.tikz}{}{\input{./figures/TikZit/equivalentaddrulensdit.tikz}}%
	\endpgfgraphicnamed
 
   \end{equation*}    
     \end{lemma}
      \begin{proof}
   This lemma follows directly from lemma \ref{genrealaiitionditlm} and the (AD) rule, here we just give another proof.
 \[
	\beginpgfgraphicnamed{TikZit/equivalentaddrulensditprf2}
	\InputIfFileExists{TikZit/equivalentaddrulensditprf2.tikz}{}{\input{./figures/TikZit/equivalentaddrulensditprf2.tikz}}%
	\endpgfgraphicnamed
 
  \]
  \end{proof}

   \begin{lemma}
\[ %
	\beginpgfgraphicnamed{TikZit/brkvariantdit}
	\InputIfFileExists{TikZit/brkvariantdit.tikz}{}{\input{./figures/TikZit/brkvariantdit.tikz}}%
	\endpgfgraphicnamed
   (Brk) \]
    \end{lemma}
     \begin{proof}
\[  %
	\beginpgfgraphicnamed{TikZit/brkvariantditprf}
	\InputIfFileExists{TikZit/brkvariantditprf.tikz}{}{\input{./figures/TikZit/brkvariantditprf.tikz}}%
	\endpgfgraphicnamed
  \]
  We  call this derived equality (Brk) as well, since it is a variant of the  (Brk).
   \end{proof} 
   

   \begin{lemma}\label{2kprf}
\[ %
	\beginpgfgraphicnamed{TikZit/2triangleup}
	\InputIfFileExists{TikZit/2triangleup.tikz}{}{\input{./figures/TikZit/2triangleup.tikz}}%
	\endpgfgraphicnamed
 \]
   \end{lemma}  
 \begin{proof}
$$  %
	\beginpgfgraphicnamed{TikZit/2triangleupprfdit}
	\InputIfFileExists{TikZit/2triangleupprfdit.tikz}{}{\input{./figures/TikZit/2triangleupprfdit.tikz}}%
	\endpgfgraphicnamed
$$
  \end{proof}     
  
      \begin{lemma}
\[  %
	\beginpgfgraphicnamed{TikZit/anddflipwitha2dit}
	\InputIfFileExists{TikZit/anddflipwitha2dit.tikz}{}{\input{./figures/TikZit/anddflipwitha2dit.tikz}}%
	\endpgfgraphicnamed
 \quad(Brkp)      \]
  where $\overrightarrow{a}=(a_1,\cdots, a_{d-1})$.
   \end{lemma} 
   \begin{proof}
   \[  %
	\beginpgfgraphicnamed{TikZit/anddflipwitha2prfdit}
	\InputIfFileExists{TikZit/anddflipwitha2prfdit.tikz}{}{\input{./figures/TikZit/anddflipwitha2prfdit.tikz}}%
	\endpgfgraphicnamed
 \]
   Therefore,
      \[  %
	\beginpgfgraphicnamed{TikZit/anddflipwitha2prf2dit}
	\InputIfFileExists{TikZit/anddflipwitha2prf2dit.tikz}{}{\input{./figures/TikZit/anddflipwitha2prf2dit.tikz}}%
	\endpgfgraphicnamed
 \]
     \end{proof}   
   

  \begin{lemma}\label{1triangle1kjbw2gnditlm}
 Let  $1\leq j \leq d-1$. Then
  \[  %
	\beginpgfgraphicnamed{TikZit/1triangle1kjbw2gndit}
	\InputIfFileExists{TikZit/1triangle1kjbw2gndit.tikz}{}{\input{./figures/TikZit/1triangle1kjbw2gndit.tikz}}%
	\endpgfgraphicnamed
 \]
   \end{lemma}  
  \begin{proof}
\[ %
	\beginpgfgraphicnamed{TikZit/1triangle1kjbw2gnditprf}
	\InputIfFileExists{TikZit/1triangle1kjbw2gnditprf.tikz}{}{\input{./figures/TikZit/1triangle1kjbw2gnditprf.tikz}}%
	\endpgfgraphicnamed
 \]
  \end{proof}

 \begin{lemma}\label{1tricpto2redlmdit} 
\[ %
	\beginpgfgraphicnamed{TikZit/1tricpto2reddit}
	\InputIfFileExists{TikZit/1tricpto2reddit.tikz}{}{\input{./figures/TikZit/1tricpto2reddit.tikz}}%
	\endpgfgraphicnamed
  \]
 \end{lemma}
 \begin{proof}
\[ %
	\beginpgfgraphicnamed{TikZit/1tricpto2redditprf}
	\InputIfFileExists{TikZit/1tricpto2redditprf.tikz}{}{\input{./figures/TikZit/1tricpto2redditprf.tikz}}%
	\endpgfgraphicnamed
 \]
  \end{proof}


 \begin{lemma}\label{trianglepiinverselm}
 \[   %
	\beginpgfgraphicnamed{TikZit/trianglepiinverse}
	\InputIfFileExists{TikZit/trianglepiinverse.tikz}{}{\input{./figures/TikZit/trianglepiinverse.tikz}}%
	\endpgfgraphicnamed
\]
 where $\overrightarrow{-1}=\protect\overbrace{(-1,\cdots,-1)}^{d-1}$.
   \end{lemma}
    \begin{proof}
\[ %
	\beginpgfgraphicnamed{TikZit/trianglepiinverseprf}
	\InputIfFileExists{TikZit/trianglepiinverseprf.tikz}{}{\input{./figures/TikZit/trianglepiinverseprf.tikz}}%
	\endpgfgraphicnamed
 \]
  \end{proof}

   \section{Normal form for qudits}
    
  Suppose $\{e_{k}| 0\leq k \leq d^{m}-1\}$ are the $d^m$-dimensional standard unit  column vectors (with entries all 0s except for a single 1):
 \[
e_k=\begin{blockarray}{cl}
\begin{block}{(c)l}
     0 &r_0\\
      \vdots    &  \\
      1&r_k\\
     \vdots &  \\
       0&r_{d^m-1}\\
\end{block}
\end{blockarray}
 \]
 where $r_i$ denote the $i$-th row, $0\leq i \leq d^{m}-1,  m\geq 1$.
 
 Let 
  \[
\ket{i}=\begin{blockarray}{cl}
\begin{block}{(c)l}
     0 &r_0\\
      \vdots    &  \\
      1&r_i\\
     \vdots &  \\
       0&r_{d-1}\\
\end{block}
\end{blockarray}
 \]
 where $r_i$ denotes the $i$-th row, $0\leq i \leq d-1$.

Then 
\begin{lemma}\label{qbitstovectorditlm}
\begin{equation}\label{qbitstovectordit}
$$\ket{a_{m-1}\cdots a_k \cdots a_0}=e_{\sum_{i=0}^{m-1}a_id^i},$$
\end{equation}
where $a_k\in \{0, 1, \cdots, d-1\}, 0\leq k \leq m-1, m\geq 1$.
\end{lemma}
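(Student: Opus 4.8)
The plan is to prove the identity by induction on the number of qudits $m$, relying on the standard Kronecker-product rule for unit vectors. Throughout I use $0$-indexing and write $e^{(N)}_k$ for the $k$-th standard unit vector of $\mathbb{C}^N$, so that $\ket{i}=e^{(d)}_i$ for $0\leq i\leq d-1$ and the target vector $e_k$ appearing in the statement is $e^{(d^m)}_k$. First I would record the elementary fact governing how indices combine under the Kronecker product of column vectors: for $u\in\mathbb{C}^p$ and $v\in\mathbb{C}^q$ one has $(u\otimes v)_{qi+j}=u_iv_j$ with $0\leq i<p$ and $0\leq j<q$, and in particular
\[
e^{(p)}_i\otimes e^{(q)}_j=e^{(pq)}_{qi+j}.
\]
This is immediate from the block definition of the Kronecker product, in which the first factor selects the high-order block of size $q$ and the second factor selects the position within that block.

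With this fact in hand the induction is routine. For the base case $m=1$ the claim reads $\ket{a_0}=e^{(d)}_{a_0}$, which holds by definition since $\sum_{i=0}^{0}a_id^i=a_0$. For the inductive step I would peel off the most significant digit, writing
\[
\ket{a_{m-1}\cdots a_0}=\ket{a_{m-1}}\otimes\ket{a_{m-2}\cdots a_0},
\]
apply the induction hypothesis to the $(m-1)$-qudit factor to obtain $\ket{a_{m-2}\cdots a_0}=e^{(d^{m-1})}_{k'}$ with $k'=\sum_{i=0}^{m-2}a_id^i$, and then invoke the Kronecker rule with $p=d$ and $q=d^{m-1}$ to get
\[
\ket{a_{m-1}}\otimes e^{(d^{m-1})}_{k'}=e^{(d)}_{a_{m-1}}\otimes e^{(d^{m-1})}_{k'}=e^{(d^m)}_{d^{m-1}a_{m-1}+k'}.
\]
The proof then closes by the positional-notation identity $d^{m-1}a_{m-1}+k'=d^{m-1}a_{m-1}+\sum_{i=0}^{m-2}a_id^i=\sum_{i=0}^{m-1}a_id^i$, which is exactly the index claimed.

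The only point requiring genuine care---and the place where an off-by-one error could creep in---is the bookkeeping of which tensor factor becomes the high-order block. The convention that $a_{m-1}$ is the most significant digit must be matched to the Kronecker rule $e_i\otimes e_j=e_{qi+j}$; this is precisely why I would peel off the leftmost (most significant) qudit rather than the rightmost, so that the weight $d^{m-1}$ attaches to $a_{m-1}$. Once this convention is pinned down the entire computation is forced, and no further nontrivial step is needed.
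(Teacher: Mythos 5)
Your proposal is correct and follows essentially the same route as the paper: induction on $m$, peeling off the most significant (leftmost) digit and applying the Kronecker identity $e^{(p)}_i\otimes e^{(q)}_j=e^{(pq)}_{qi+j}$, which the paper carries out explicitly in block-vector form rather than citing as a separate rule. The index bookkeeping $d^{m-1}a_{m-1}+\sum_{i=0}^{m-2}a_id^i=\sum_{i=0}^{m-1}a_id^i$ matches the paper's final step exactly.
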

\begin{proof}
We prove by induction on $m$. If $m=1$, then by the definition of $e_k$ and $\ket{i}$, we have $\ket{a_0}=e_{a_0}=e_{\sum_{i=0}^{m-1}a_id^i}.$ Suppose (\ref{qbitstovectordit}) holds for  $m \geq 1$, then 
\[
\begin{array}{c}
\ket{a_ma_{m-1}\cdots a_k \cdots a_0}=\ket{a_m}\otimes \ket{a_{m-1}\cdots a_k \cdots a_0}=\ket{a_m}\otimes e_{\sum_{i=0}^{m-1}a_id^i}\\
=\begin{blockarray}{cl}
\begin{block}{(c)l}
     0 &r_0\\
      \vdots    &  \\
      1&r_{a_m}\\
     \vdots &  \\
       0&r_{d-1}\\
\end{block}
\end{blockarray}\otimes e_{\sum_{i=0}^{m-1}a_id^i}=
\begin{blockarray}{cl}
\begin{block}{(c)l}
    \Large O &R_0\\
      \vdots    &  \\
     e_{\sum_{i=0}^{m-1}a_id^i}&R_{a_m}\\
     \vdots &  \\
       \Large O  &R_{d-1}\\
\end{block}
\end{blockarray}=e_{a_md^m+\sum_{i=0}^{m-1}a_id^i}=e_{\sum_{i=0}^{m}a_id^i},

\end{array}
\]
where $R_i$ denotes a column vector with $d^m$ elements.
 \end{proof}
 
\begin{lemma} Suppose  $0\leq j_1< \cdots < j_s \leq m-1, 1\leq s \leq m$. 
Then
 \begin{equation}\label{rowaddrepresentation}
 \left\llbracket%
	\beginpgfgraphicnamed{TikZit/rowaddrepresentdit}
	\InputIfFileExists{TikZit/rowaddrepresentdit.tikz}{}{\input{./figures/TikZit/rowaddrepresentdit.tikz}}%
	\endpgfgraphicnamed
\right\rrbracket
=\begin{blockarray}{cccccl}
\begin{block}{(ccccc)l}
     1 & \cdots & 0 &\cdots & 0 &r_0\\
     \vdots    & \ddots & &&  \vdots&  \\
        0   & \cdots & 1 & \cdots& a&r_l\\
       \vdots    & &  & \ddots&  \vdots &  \\
        0   & \cdots &0 & \cdots& 1&r_{d^m-1}\\
\end{block}
\end{blockarray}
 \end{equation}
where $\overrightarrow{a}=(0,\cdots, 0, a), a\in  \mathbb C$, the $\overrightarrow{a}$ node connects to  $ j_i $ with  $k_i$ wires via red dots below it, $1\leq k_i\leq d-1,$   $r_l$ denotes the $l$-th row, $l=d^m-1-(k_1d^{j_1}+\cdots+k_sd^{j_s})$.
 \end{lemma}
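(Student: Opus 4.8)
The plan is to prove the claim \emph{semantically}: since the statement asserts an equality between the standard interpretation $\llbracket\,\cdot\,\rrbracket$ of the diagram and an explicit $d^{m}\times d^{m}$ matrix, I would evaluate the interpretation directly on the computational basis and read off the matrix entries, using Lemma~\ref{qbitstovectordit} throughout to pass between the digit strings $\ket{x_{m-1}\cdots x_{0}}$ and the standard unit vectors $e_{x}$ with $x=\sum_{i}x_{i}d^{i}$. The diagram has two kinds of pieces: the Z-spider carrying the phase vector $\overrightarrow{a}=(0,\cdots,0,a)$, whose interpretation $\sum_{j}a_{j}\ket{j}^{\otimes N}$ collapses (because $a_{0}=1$ and $a_{d-1}=a$ are its only nonzero components) to the two-term tensor $\ket{0}^{\otimes N}+a\ket{d-1}^{\otimes N}$ with $N=\sum_{i}k_{i}$, and the red X-spiders that attach $k_{i}$ of its legs to the wire $j_{i}$ while the remaining wires carry identities.

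First I would feed an arbitrary basis input $e_{x}=\ket{x_{m-1}\cdots x_{0}}$ and expand the green phase spider into its two surviving branches. In the branch coming from $a_{0}=1$ every green leg carries the value $0$, so each X-spider's modular sum-condition reduces to the identity on its wire; summing over all $x$ this branch contributes exactly $I_{d^{m}}$, which accounts for the diagonal of the claimed matrix. In the branch coming from $a_{d-1}=a$ every green leg carries the value $d-1$, so the X-spider on wire $j_{i}$ receives $k_{i}$ copies of $d-1$ and its defining congruence subtracts $k_{i}\ (\mathrm{mod}\ d)$ from the digit $x_{j_{i}}$. The key point to establish here is that this second branch is supported \emph{only} on the reference input $x=d^{m}-1$, i.e.\ on $\ket{(d-1)\cdots(d-1)}$, and that on this input it outputs $a$ times the string with digit $d-1-k_{i}$ in each targeted position.

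Once that is in hand, the conversion is immediate from Lemma~\ref{qbitstovectordit}: the reference input is $e_{d^{m}-1}$, and lowering the digit at position $j_{i}$ by $k_{i}$ replaces $d^{m}-1$ by $d^{m}-1-\sum_{i}k_{i}d^{j_{i}}=l$, so the $a$-branch equals $a\,e_{l}\,e_{d^{m}-1}^{\top}$. Adding the two branches gives $\llbracket\,\cdot\,\rrbracket=I_{d^{m}}+a\,e_{l}e_{d^{m}-1}^{\top}$, which is exactly the block matrix in \eqref{rowaddrepresentation} (the identity with the single extra entry $a$ in row $l$ and last column).

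The main obstacle is the highlighted support claim for the $a$-branch: I must check carefully that the X-spider congruences, together with the all-equal legs forced by the Z-spider, leave no spurious off-diagonal contributions in columns other than $d^{m}-1$, and that the digits $d-1-k_{i}$ are legitimate (which is where the hypothesis $1\le k_{i}\le d-1$ is used). If this bookkeeping proves delicate, an alternative is a purely diagrammatic induction on the number $s$ of attached wires, peeling off one connection at a time with the triangle-copy and break rules (Bsj), (Brk2) and (Pcy) and invoking Lemma~\ref{qbitstovectordit} at the base case; but for a universality-type identity such as this the direct evaluation of $\llbracket\,\cdot\,\rrbracket$ is the more economical route.
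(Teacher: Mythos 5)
Your overall strategy coincides with the paper's: both proofs evaluate the standard interpretation column by column on computational basis inputs, split into the case where some digit of the input differs from $d-1$ and the case of $e_{d^m-1}$, and use Lemma~\ref{qbitstovectorditlm} to pass between digit strings and standard unit vectors. The gap lies exactly at the step you flag as ``the main obstacle'' and then defer: under your reading of the diagram --- the phase node $\overrightarrow{a}$ plugged directly into modular-adder X-spiders sitting on the target wires --- the support claim for the $a$-branch is not merely delicate, it is false. An X-spider on wire $j_i$ whose $k_i$ extra legs all carry $d-1$ imposes the congruence $y \equiv x_{j_i} + k_i(d-1) \equiv x_{j_i} - k_i \pmod d$ for \emph{every} input digit $x_{j_i}$; nothing ties this branch to the input $\ket{d-1}$. (The Z-spider forces only its \emph{own} legs to agree; it puts no constraint on the wire digits entering the X-spiders.) Hence the diagram as you describe it denotes $I_{d^m} + a\,P$, where $P$ is the permutation sending each targeted digit $x_{j_i}$ to $x_{j_i}-k_i \bmod d$ --- a matrix with an entry $a$ in \emph{every} column --- rather than the row-addition matrix $I_{d^m} + a\,e_l e_{d^m-1}^{\top}$ of (\ref{rowaddrepresentation}). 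No bookkeeping can reconcile these; they differ whenever $a\neq 0$.

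The missing idea is the conditioning mechanism present in the actual diagram of the lemma, which is precisely what the paper's proof exploits: for $0\le k\le d^m-2$ it picks a digit $a_t\in\{0,\dots,d-2\}$ of $e_k$ and shows that this digit annihilates the $a$-contribution, so the diagram returns $e_k$; only on the input $e_{d^m-1}$ does the gadget on the target wires produce $\ket{d-1}^{\otimes s} + a\ket{(d-1-k_s),\cdots,(d-1-k_1)}$, which Lemma~\ref{qbitstovectorditlm} converts into $e_{d^m-1}+a e_l$. A quick sanity check that your reading cannot be the intended one: the normal form (\ref{normalformdit}) composes $d^m-1$ such gadgets and relies on products $(I + a\,e_{l}e_{d^m-1}^{\top})(I + b\,e_{l'}e_{d^m-1}^{\top})$ having no cross terms (since $e_{d^m-1}^{\top}e_{l'}=0$); with factors of the form $I + a\,P$ the cross terms would survive and the normal form would not represent the intended vector. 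So before your two-branch computation can be run, you must first identify, from the figure, the structure on each target wire (beyond a bare adder spider) that makes the $a$-branch fire only on $\ket{d-1}$ inputs; once that is in hand, your case analysis and the final assembly into $I_{d^m}+a\,e_le_{d^m-1}^{\top}$ go through as outlined.
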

 \begin{proof}
 Denote by $A$ the $d^m\times d^m$ row-addition elementary matrix  
 in (\ref{rowaddrepresentation}). Let $\{A_{k}| 0\leq k \leq d^{m}-1\}$ be the set of the columns  of $A$. Then
 $$A_{k}=Ae_{k}=\left\{
\begin{array}{ll}
e_k, 0\leq k \leq d^{m}-2\\
e_{d^m-1}+ae_l, k = d^{m}-1\\
\end{array}
\right.$$ 
where 
 \[
A_{d^{m}-1}=\begin{blockarray}{cl}
\begin{block}{(c)l}
     0 &r_0\\
      \vdots    &  \\
      a&r_l\\
     \vdots &  \\
       1&r_{d^m-1}\\
\end{block}
\end{blockarray}
 \]
By the equality (\ref{qbitstovectordit}),  assume $e_k$ has the form $e_k=\ket{a_{m-1}\cdots a_i \cdots a_0}$, $a_i\in \{0, 1, \cdots, d-1\}, 0\leq i \leq m-1$. Clearly, if $0\leq k \leq d^{m}-2$, then there must exist some $a_t \in \{0, 1, \cdots, d-2\}$. Therefore, 
\[
\left\llbracket%
	\beginpgfgraphicnamed{TikZit/rowaddverify0dit}
	\InputIfFileExists{TikZit/rowaddverify0dit.tikz}{}{\input{./figures/TikZit/rowaddverify0dit.tikz}}%
	\endpgfgraphicnamed
\right\rrbracket =e_k
\]
 where $i_t \in \{0, 1, \cdots, d-1 \}$ and $i_t\neq 1.$ Then  for $0\leq k \leq d^{m}-2$, we have
\[
Ae_k=\left\llbracket%
	\beginpgfgraphicnamed{TikZit/rowaddverify1dit}
	\InputIfFileExists{TikZit/rowaddverify1dit.tikz}{}{\input{./figures/TikZit/rowaddverify1dit.tikz}}%
	\endpgfgraphicnamed
\right\rrbracket =e_k
\]

For $k = d^{m}-1$, we have
 $$\left\llbracket%
	\beginpgfgraphicnamed{TikZit/standardunitdit}
	\InputIfFileExists{TikZit/standardunitdit.tikz}{}{\input{./figures/TikZit/standardunitdit.tikz}}%
	\endpgfgraphicnamed
\right\rrbracket=e_{d^m-1}$$  

 Then
\[
Ae_k =\begin{array}{ll}
\left\llbracket%
	\beginpgfgraphicnamed{TikZit/rowaddverify2dit}
	\InputIfFileExists{TikZit/rowaddverify2dit.tikz}{}{\input{./figures/TikZit/rowaddverify2dit.tikz}}%
	\endpgfgraphicnamed
\right\rrbracket &\\ =
[\ket{\underset{m-1}{d-1}}\otimes\cdots \otimes\underset{j_s}{I_d}\cdots\otimes \ket{d-1}\cdots \otimes\underset{j_1}{I_d}\cdots \otimes\ket{\underset{0}{d-1}}]\circ
[\ket{d-1}^{\otimes s}+a\ket{(d-1-k_s),\cdots, (d-1-k_1)}]&\\

=\ket{d-1}^{\otimes m}+a\ket{\underset{m-1}{(d-1)}\cdots \underset{j_s}{(d-1-k_s)}\cdots (d-1)\cdots \underset{j_1}{(d-1-k_1)}\cdots \underset{0}{(d-1)}}&\\
=e_{d^m-1}+ae_{l} &
\end{array}
\]
where $l=d^m-1-(k_1d^{j_1}+\cdots+k_sd^{j_s})$, $I_d$ is the $d$ dimensional identity operator, and we used 
( \ref{qbitstovectordit}) for the last equality. 

\end{proof}
 Similarly, we can prove the following lemma. 
 \begin{lemma} Suppose  $0\leq j_1< \cdots < j_s \leq m-1, 1\leq s \leq m$. 
Then
 \begin{equation}\label{rowmultrepresentditlm}
 \left\llbracket%
	\beginpgfgraphicnamed{TikZit/rowmultrepresentdit}
	\InputIfFileExists{TikZit/rowmultrepresentdit.tikz}{}{\input{./figures/TikZit/rowmultrepresentdit.tikz}}%
	\endpgfgraphicnamed
\right\rrbracket
=\begin{blockarray}{cccccl}
\begin{block}{(ccccc)l}
     1 & \cdots & 0 &\cdots & 0 &r_0\\
     \vdots    & \ddots & &&  \vdots&  \\
        0   & \cdots & 1 & \cdots& 0&r_i\\
       \vdots    & &  & \ddots&  \vdots &  \\
        0   & \cdots &0 & \cdots& a&r_{d^m-1}\\
\end{block}
\end{blockarray}
 \end{equation}
where $\overrightarrow{a}=(1,\cdots, 1, a), a\in  \mathbb C$.
 \end{lemma}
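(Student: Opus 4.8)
The plan is to run the argument exactly parallel to the proof of (\ref{rowaddrepresentation}), pointing only to the two places where the diagonal phase $(1,\dots,1,a)$ changes the outcome relative to $(0,\dots,0,a)$. First I would denote by $A$ the $d^m\times d^m$ row-multiplication elementary matrix displayed in (\ref{rowmultrepresentditlm}); concretely $A=\mathrm{diag}(1,\dots,1,a)$, so its columns are $A_k=Ae_k=e_k$ for $0\le k\le d^m-2$ and $A_{d^m-1}=a\,e_{d^m-1}$. It therefore suffices to show that the diagram sends each standard unit vector $e_k$ to $A_k$, and by Lemma~\ref{qbitstovectorditlm} I would write $e_k=\ket{a_{m-1}\cdots a_0}$ with $a_i\in\{0,\dots,d-1\}$, so that feeding $e_k$ into the diagram is the same as fixing each input wire to the value of the corresponding digit.

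The single substantive difference from the row-addition diagram is the label on the green node. Reading off the standard interpretation $\llbracket R_{Z,\overrightarrow a}^{(n,m)}\rrbracket=\sum_{j=0}^{d-1}a_j\ket{j}^{\otimes m}\bra{j}^{\otimes n}$ with $a_0=1$, the phase $\overrightarrow a=(1,\dots,1,a)$ makes this node the \emph{diagonal} operator that leaves $\ket{j}$ fixed for every $0\le j\le d-2$ and multiplies $\ket{d-1}$ by $a$, in contrast to the select-and-copy behaviour of $(0,\dots,0,a)$ that produced the off-diagonal term $a\,e_l$ in the previous lemma. Since a Z-spider forces all of its legs to carry a common computational-basis value and then contributes the scalar $a_j$ for that value, the diagram is the identity on any basis input unless the legs attached to the green node all carry $d-1$, in which case the only effect is an overall scalar $a$; in particular the wire-shifting machinery that carried $\ket{d-1}$ to the shifted ket $\ket{(d-1-k_s),\dots,(d-1-k_1)}$ in the row-addition computation now has nothing to act on, because the diagonal phase never creates a $\ket{0}$ component to be shifted.

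The case analysis then closes the proof. For $0\le k\le d^m-2$ at least one digit $a_t$ is strictly less than $d-1$, the green node contributes the factor $1$, and every remaining wire acts as an identity, so the diagram returns $e_k=A_k$. For $k=d^m-1$ we have $e_{d^m-1}=\ket{d-1}^{\otimes m}$, every leg feeding the node carries $d-1$, the node contributes $a_{d-1}=a$, and the diagram returns $a\,e_{d^m-1}=A_{d^m-1}$. As the diagram and $A$ agree on every column of the standard basis, they are equal, which is exactly (\ref{rowmultrepresentditlm}).

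I expect the only delicate point to be the same bookkeeping already met in the previous lemma: tracking which input wires are routed into the green node through the red dots at the positions $j_1,\dots,j_s$, and with what multiplicity, so as to be sure that the phrase ``all legs carry $d-1$'' corresponds precisely to the single index $d^m-1$ rather than to a larger set of diagonal entries. Once that wiring is pinned down the scalar computation is immediate, since the diagonal phase $(1,\dots,1,a)$ produces no shift; so no genuinely new obstacle arises beyond what the row-addition argument has already handled.
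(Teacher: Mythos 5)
Your high-level plan (verify the diagram column by column against $A=\mathrm{diag}(1,\dots,1,a)$, splitting into the cases ``some digit $<d-1$'' and ``all digits $=d-1$'') is indeed what the paper means by ``Similarly''. But your central premise --- that ``the single substantive difference from the row-addition diagram is the label on the green node'' --- is false, and the justification built on it collapses. To see this concretely: the standard interpretation is linear in each coordinate of the green node's phase vector, so if the two figures had identical wiring one could write $\left\llbracket D(\overrightarrow{a})\right\rrbracket = M_0+\sum_{j=1}^{d-1}a_jM_j$ with the \emph{same} fixed matrices $M_j$ in both lemmas. The row-addition lemma, holding for every $a\in\mathbb C$, forces $M_0=I$ and $M_{d-1}=e_le_{d^m-1}^T$ with $l\neq d^m-1$; but then $\left\llbracket D((1,\dots,1,a))\right\rrbracket$ has coefficient-of-$a$ part $e_le_{d^m-1}^T$, whereas $\mathrm{diag}(1,\dots,1,a)=I+(a-1)e_{d^m-1}e_{d^m-1}^T$ has coefficient-of-$a$ part $e_{d^m-1}e_{d^m-1}^T$, so the same wiring cannot satisfy both statements. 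The case $d=2$ makes this immediate: there $(0,\dots,0,a)$ and $(1,\dots,1,a)$ are literally the same one-entry vector $(a)$, yet the two lemmas assert different matrices. Hence the diagram of (\ref{rowmultrepresentditlm}) (the figure \texttt{rowmultrepresentdit}) is necessarily wired differently from that of (\ref{rowaddrepresentation}), which is exactly why the paper draws a separate figure, and a proof that never engages with that different wiring cannot be correct.

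The specific steps of your case analysis inherit this defect. The legs of the green node are internal wires whose values are summed over in the interpretation; they are not ``set'' by the basis input, so ``the diagram is the identity unless the legs attached to the green node all carry $d-1$'' is not a legitimate reduction. Moreover, the phase $(1,\dots,1,a)$ does not remove shift-producing components, it adds them: as a state the node is $\ket{0}^{\otimes N}+\ket{1}^{\otimes N}+\cdots+\ket{d-2}^{\otimes N}+a\ket{d-1}^{\otimes N}$, so every value $j=1,\dots,d-2$ now enters the red dots with coefficient $1$ and generates exactly the kind of shifted terms the row-addition proof had to control; and the $j=0$ component is always present with coefficient $1$ (the formalism fixes $a_0=1$), so on input $e_{d^m-1}$ it alone would contribute $e_{d^m-1}$, already incompatible with your claimed output $a\,e_{d^m-1}$ under your own same-wiring assumption. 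The paper's intended proof mirrors its row-addition proof on the \emph{actual} row-multiplication diagram: plug in the classical points representing $e_k$ via Lemma \ref{qbitstovectorditlm} and rewrite diagrammatically in the two cases for that wiring. To repair your proposal you must first pin down the structure of \texttt{rowmultrepresentdit} and then prove, for that structure, that all non-identity contributions vanish when some digit differs from $d-1$, and that everything except the coefficient-$a$ contribution vanishes on $e_{d^m-1}$; neither assertion follows from the phase label alone.
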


Given an arbitrary vector as
$ \begin{pmatrix}
        a_0  \\  a_1\\
        \vdots \\ a_{d^m-2}\\
        a_{d^m-1} \end{pmatrix},$ we claim that it can be uniquely represented by the following normal form:  
         \begin{equation}\label{normalformdit}
	\beginpgfgraphicnamed{TikZit/quditnormalform}
	\InputIfFileExists{TikZit/quditnormalform.tikz}{}{\input{./figures/TikZit/quditnormalform.tikz}}%
	\endpgfgraphicnamed

 \end{equation}	
where $ \overrightarrow{a}_i=(0,\cdots, 0, a_i), 0\leq i \leq d^{m}-2, \overrightarrow{a}_{d^m-1}=(1,\cdots, 1, a_{d^m-1})$. $a_i, a_{d^m-1} \in  \mathbb C$. There are $(d-1)\binom{m}{1}+(d-1)^2\binom{m}{2} +\cdots +(d-1)^m \binom{m}{m}=d^m-1$ row additions in the normal form.

Like the qubit case, the normal form (\ref{normalformdit}) is obtained via the following processes:
\[
 \begin{pmatrix}
        0  \\ 0\\
        \vdots \\
        1 \end{pmatrix} \xrightarrow[\text{addition}]{\text{row}}  \begin{pmatrix}
        a_0  \\ 0\\
        \vdots \\
        1 \end{pmatrix} \xrightarrow[\text{addition}]{\text{row}}  \begin{pmatrix}
        a_0  \\  a_1\\
        \vdots \\ a_{d^m-2}\\
        1 \end{pmatrix}  \xrightarrow[\text{multiplication}]{\text{row}}   \begin{pmatrix}
        a_0  \\  a_1\\
        \vdots \\ a_{d^m-2}\\
        a_{d^m-1} \end{pmatrix}
\]
In the case of $m=0$, for any complex number $a$, its normal form is defined as
  $$ %
	\beginpgfgraphicnamed{TikZit//scalarnormdit}
	\begin{tikzpicture}
	\begin{pgfonlayer}{nodelayer}
		\node [style=gbox] (0) at (0, -0.25) {${\scriptstyle \overrightarrow{a}}$};
		\node [style=rn] (1) at (0, 0.5) {$K_1$};
	\end{pgfonlayer}
	\begin{pgfonlayer}{edgelayer}
		\draw (1) to (0);
	\end{pgfonlayer}
\end{tikzpicture}}%
	\endpgfgraphicnamed
$$
where 
 $$\overrightarrow{a}=(0,\cdots, 0, a), \left\llbracket%
	\beginpgfgraphicnamed{TikZit//scalarnormdit}
	}%
	\endpgfgraphicnamed
\right\rrbracket=a.$$  
By the map-state duality, we get the universality of qudit ZX-calculus over  $\mathbb{C}$:  any $d^m \times d^n$ matrix $A$ with $m, n \geq 0$ can be represented by a ZX diagram.

Furthermore, if we introduce the W spider given in \cite{qwangqditzw} as follows:
 $$ %
	\beginpgfgraphicnamed{TikZit//wntoalgzx}
	\InputIfFileExists{TikZit//wntoalgzx.tikz}{}{\input{./figures/TikZit//wntoalgzx.tikz}}%
	\endpgfgraphicnamed
 \quad\quad\quad\quad  %
	\beginpgfgraphicnamed{TikZit//mlegsblackspider}
	\InputIfFileExists{TikZit//mlegsblackspider.tikz}{}{\input{./figures/TikZit//mlegsblackspider.tikz}}%
	\endpgfgraphicnamed
$$
where \[ \left\llbracket %
	\beginpgfgraphicnamed{TikZit/mlegsblackspiderone}
	\InputIfFileExists{TikZit/mlegsblackspiderone.tikz}{}{\input{./figures/TikZit/mlegsblackspiderone.tikz}}%
	\endpgfgraphicnamed
 \right\rrbracket=\underbrace{\ket{0\cdots0}}_{m}\bra{0}+\sum_{i=1}^{d-1}\sum_{k=1}^{m}\overbrace{\ket{\underbrace{0\cdots 0}_{k-1} i 0\cdots 0}}^{m}\bra{i}, \]

then for an arbitrary vector 
$$ \begin{pmatrix}
        a_0  \\  a_1\\
        \vdots \\ a_{d^m-2}\\
        a_{d^m-1} \end{pmatrix},$$
  we have 
\begin{equation}\label{wstyleqduditnformeq}
   \begin{pmatrix}
        a_0  \\  a_1\\
        \vdots \\ a_{d^m-2}\\
        a_{d^m-1} \end{pmatrix}=\quad\quad\quad  \left\llbracket%
	\beginpgfgraphicnamed{TikZit//syw2}
	\InputIfFileExists{TikZit//syw2.tikz}{}{\input{./figures/TikZit//syw2.tikz}}%
	\endpgfgraphicnamed
\right\rrbracket
 \end{equation}
        
  where    $ \overrightarrow{a}_j=(0,\cdots, 0, a_j),  a_j  \in  \mathbb C, 0\leq j \leq d^{m}-1, j= \sum_{i=0}^{m-1}k_id^i, 0\leq k_i \leq d-1$, and the j-th green box $\overrightarrow{a}_j$ has connections to the i-th phase free pink node ($0\leq i \leq m-1$)  with   $d-k_i$ wires connected (if $k_i=0$, then due to the qudit Hopf law, there is no connection). 
   The diagram in (\ref{wstyleqduditnformeq}) can be seen as a compressed version of the normal form as shown in  (\ref{normalformdit}), but is obviously easier  to be generalised to a normal form of a vector whose elements belong to arbitrary commutative semirings. Below we give a verification for the correctness of this new normal form.

By the interpretation of the W spider, we have 
\[  %
	\beginpgfgraphicnamed{TikZit//k1onqditw}
	\InputIfFileExists{TikZit//k1onqditw.tikz}{}{\input{./figures/TikZit//k1onqditw.tikz}}%
	\endpgfgraphicnamed
 \]   
Considering  the general j-th term in the above sum decomposition,   we get
   \[  %
	\beginpgfgraphicnamed{TikZit//k1onqditwsumterms2v2}
	\InputIfFileExists{TikZit//k1onqditwsumterms2v2.tikz}{}{\input{./figures/TikZit//k1onqditwsumterms2v2.tikz}}%
	\endpgfgraphicnamed
 \]     
   \[=a_j\ket{ k_{m-1} \cdots k_{i} \cdots  k_{0}}  \] 
   By Lemma \ref{qbitstovectorditlm}, 
   \[ \ket{ k_{m-1} \cdots k_{i} \cdots  k_{0}}   =e_j
   \]
  Therefore the whole sum is $\sum_{j=0}^{d^{m-1}}a_je_j$ which is exactly the vector $$ \begin{pmatrix}
        a_0  \\  a_1\\
        \vdots \\ a_{d^m-2}\\
        a_{d^m-1} \end{pmatrix}.$$
  \section{Qufinite ZX-calculus}
 Based on the qudit ZX-calculi introduced in previous sections, now we set up a unified framework which we call qufinite ZX-calculus. The main idea is to label each wire with its dimension and add two new generators called dimension-splitter and dimension-binder respectively which were first presented in \cite{qwangslides}  and then deployed in  \cite{cwiconscious}. Note that  a wire labelled with 1 will be depicted as empty, as usually did.
 
First we give the generators of   qufinite ZX-calculus.

\begin{table}[!h]
\begin{center} 
\begin{tabular}{|r@{~}r@{~}c@{~}c|r@{~}r@{~}c@{~}c|}
\hline
&& & %
	\beginpgfgraphicnamed{TikZit//generalgreenspiderqdit}
	\InputIfFileExists{TikZit//generalgreenspiderqdit.tikz}{}{\input{./figures/TikZit//generalgreenspiderqdit.tikz}}%
	\endpgfgraphicnamed
  & &&& %
	\beginpgfgraphicnamed{TikZit//HHdagd}
	\InputIfFileExists{TikZit//HHdagd.tikz}{}{\input{./figures/TikZit//HHdagd.tikz}}%
	\endpgfgraphicnamed
\\\hline
&& & %
	\beginpgfgraphicnamed{TikZit//triangled}
	\InputIfFileExists{TikZit//triangled.tikz}{}{\input{./figures/TikZit//triangled.tikz}}%
	\endpgfgraphicnamed
  & &&& %
	\beginpgfgraphicnamed{TikZit//triangledinv}
	\InputIfFileExists{TikZit//triangledinv.tikz}{}{\input{./figures/TikZit//triangledinv.tikz}}%
	\endpgfgraphicnamed
\\\hline
&& & %
	\beginpgfgraphicnamed{TikZit//idqudit}
	\InputIfFileExists{TikZit//idqudit.tikz}{}{\input{./figures/TikZit//idqudit.tikz}}%
	\endpgfgraphicnamed
  & &&& %
	\beginpgfgraphicnamed{TikZit//swapd}
	\InputIfFileExists{TikZit//swapd.tikz}{}{\input{./figures/TikZit//swapd.tikz}}%
	\endpgfgraphicnamed
\\\hline
&& & %
	\beginpgfgraphicnamed{TikZit//capdit}
	\InputIfFileExists{TikZit//capdit.tikz}{}{\input{./figures/TikZit//capdit.tikz}}%
	\endpgfgraphicnamed
  & &&& %
	\beginpgfgraphicnamed{TikZit//cupdit}
	\InputIfFileExists{TikZit//cupdit.tikz}{}{\input{./figures/TikZit//cupdit.tikz}}%
	\endpgfgraphicnamed
\\\hline
&& & %
	\beginpgfgraphicnamed{TikZit//binderdit}
	\InputIfFileExists{TikZit//binderdit.tikz}{}{\input{./figures/TikZit//binderdit.tikz}}%
	\endpgfgraphicnamed
  & &&& %
	\beginpgfgraphicnamed{TikZit//binderditflip}
	\InputIfFileExists{TikZit//binderditflip.tikz}{}{\input{./figures/TikZit//binderditflip.tikz}}%
	\endpgfgraphicnamed
\\\hline
\end{tabular} \caption{Generators of qufinite ZX-calculus, where  $d, m,n\in \mathbb N, d\geq 2; \protect\overrightarrow{\alpha_d}=(a_1,\cdots, a_{d-1}); a_i\in \mathbb C; i \in \{1,\cdots, d-1\}; j \in \{0, 1,\cdots, d-1\}; s, t \in \mathbb N \backslash\{0\}$. }\label{qbzxgeneratordit}
\end{center}
\end{table}

\FloatBarrier
\begin{remark}
  The two diagrams at the bottom of the table of generators are called  dimension-binder and  dimension-splitter respectively, in the qubit case they are similar to the divider and gatherer in the  as introduced in \cite{tdsscalarble}.
 
Since now wires are labelled with any positive integers, the category of  diagrams is not a PROP anymore,  but still a compact closed category.
\end{remark}

The rules of qufinite ZX-calculus can be divided in two parts: one part is the same as the qudit rules except each wire labelled with an integer $d$,  the other part has dimension-splitter and dimension-binder involved. Below we only give the rules of the second part which were partly shown in \cite{cwiconscious}.

 \begin{figure}[!h]
\begin{center} 
\[
\quad \qquad\begin{array}{|cc|}
\hline
	\beginpgfgraphicnamed{TikZit//binderunitary1}
	\InputIfFileExists{TikZit//binderunitary1.tikz}{}{\input{./figures/TikZit//binderunitary1.tikz}}%
	\endpgfgraphicnamed
&%
	\beginpgfgraphicnamed{TikZit//binderunitary2}
	\InputIfFileExists{TikZit//binderunitary2.tikz}{}{\input{./figures/TikZit//binderunitary2.tikz}}%
	\endpgfgraphicnamed
\\
    &\\ 
	\beginpgfgraphicnamed{TikZit//binderassoc}
	\InputIfFileExists{TikZit//binderassoc.tikz}{}{\input{./figures/TikZit//binderassoc.tikz}}%
	\endpgfgraphicnamed
&%
	\beginpgfgraphicnamed{TikZit//bindergspider}
	\InputIfFileExists{TikZit//bindergspider.tikz}{}{\input{./figures/TikZit//bindergspider.tikz}}%
	\endpgfgraphicnamed
\\
    &\\ 
	\beginpgfgraphicnamed{TikZit//binderwith1rt}
	\InputIfFileExists{TikZit//binderwith1rt.tikz}{}{\input{./figures/TikZit//binderwith1rt.tikz}}%
	\endpgfgraphicnamed
&%
	\beginpgfgraphicnamed{TikZit//binderwith1lt}
	\InputIfFileExists{TikZit//binderwith1lt.tikz}{}{\input{./figures/TikZit//binderwith1lt.tikz}}%
	\endpgfgraphicnamed
\\
    &\\ 
  		  		\hline  
  		\end{array}\]      
  	\end{center}
  	\caption{Qufinite ZX-calculus rules  involving dimension-splitter and dimension-binder,  where $\protect\overrightarrow{1}_d=\protect\overbrace{(1,\cdots,1)}^{d-1}, \protect\overrightarrow{0}_d=\protect\overbrace{(0,\cdots,0)}^{d-1},  \protect\overrightarrow{\alpha_d}=(a_1,\cdots, a_{d-1}), \protect\overrightarrow{\beta_d}=(b_1,\cdots, b_{d-1}), a_k, b_k\in  \mathbb C, k \in \{1,\cdots, d-1\},  j \in \{ 1,\cdots, d-1\}, s, t, u \in \mathbb N \backslash\{0\}.$}\label{qufiniterules2}
  \end{figure}  
  
 \FloatBarrier

\[
 \left\llbracket%
	\beginpgfgraphicnamed{TikZit//binderdit}
	\InputIfFileExists{TikZit//binderdit.tikz}{}{\input{./figures/TikZit//binderdit.tikz}}%
	\endpgfgraphicnamed
\right\rrbracket= \sum_{k=0}^{s-1}\sum_{l=0}^{t-1}\ket{kt+l}\bra{kl}, 
 \quad \quad
  \left\llbracket%
	\beginpgfgraphicnamed{TikZit//binderditflip}
	\InputIfFileExists{TikZit//binderditflip.tikz}{}{\input{./figures/TikZit//binderditflip.tikz}}%
	\endpgfgraphicnamed
\right\rrbracket= \sum_{k=0}^{st-1}\ket{[\frac{k}{t}]}\ket{k-t[\frac{k}{t}]}\bra{k}, \quad\quad   \left\llbracket%
	\beginpgfgraphicnamed{TikZit//swapd}
	\InputIfFileExists{TikZit//swapd.tikz}{}{\input{./figures/TikZit//swapd.tikz}}%
	\endpgfgraphicnamed
\right\rrbracket=\sum_{k=0}^{s-1}\sum_{l=0}^{t-1}\ket{kl}\bra{lk},
 \]


\[  \llbracket D_1\otimes D_2  \rrbracket =  \llbracket D_1  \rrbracket \otimes  \llbracket  D_2  \rrbracket, \quad 
 \llbracket D_1\circ D_2  \rrbracket =  \llbracket D_1  \rrbracket \circ  \llbracket  D_2  \rrbracket,
  \]
where 
$s, t \in \mathbb N \backslash\{0\},   \bra{i} =\overbrace{(\underbrace{0,\cdots,1}_{i+1}, \cdots, 0)}^{d}, ~ \ket{i}=(\overbrace{(\underbrace{0,\cdots,1}_{i+1}, \cdots, 0)}^{d})^T,  i \in \{0, 1,\cdots, d-1\},$  and $[r]$ is the integer part of a real number $r$.

\begin{remark}
In the 1-dimensional Hilbert space $H_1=\mathbb C$, we make the convention that $\ket{0}=1$.
\end{remark}

\subsection{Normal form and universality}
Given an arbitrary $s\times t$ matrix $M$ over $\mathbb C$: 
\begin{equation}\label{anymatrix}
M=\begin{blockarray}{cccc}
\begin{block}{(cccc)}
     a_0 & a_1  &\cdots & a_{t-1} \\
      a_t & a_{t+1}  &\cdots & a_{t+t-1} \\
     \vdots    & \ddots & &  \vdots  \\
       a_{kt }& a_{kt+1}  &\cdots & a_{kt+t-1} \\
       \vdots    & &   \ddots&  \vdots   \\
       a_{(s-1)t }& a_{(s-1)t+1} &\cdots & a_{(s-1)t+t-1} \\
\end{block}
\end{blockarray}=(a_{kt+l})_{0\leq k \leq s-1, 0\leq l \leq t-1}
\end{equation}

\begin{theorem}
The matrix shown in (\ref{anymatrix}) can be represented by the following diagram:
 \begin{equation}\label{nomalformfnit}
	\beginpgfgraphicnamed{TikZit//normalformfit2}
	\InputIfFileExists{TikZit//normalformfit2.tikz}{}{\input{./figures/TikZit//normalformfit2.tikz}}%
	\endpgfgraphicnamed

\end{equation}
where $1\leq k \leq st-1$,  $ \overrightarrow{a}_i=\overbrace{(0,\cdots,0, a_i)}^{st-1}, 0\leq i \leq st-2, \overrightarrow{a}_{st-1}=\overbrace{(1,\cdots,1, a_{st-1})}^{st-1}$. 
\end{theorem}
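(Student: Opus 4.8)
The plan is to reduce the matrix case to the single-qudit vector case that was already settled in the previous section, using the dimension-splitter to reshape one wire of dimension $st$ into the tensor product of an $s$-labelled and a $t$-labelled wire. First I would invoke map-state duality: an $s\times t$ matrix $M=(a_{kt+l})$ is, up to bending the $t$-labelled output wire back with a cap ($C_a$), exactly the same data as the state $\sum_{k=0}^{s-1}\sum_{l=0}^{t-1}a_{kt+l}\,\ket{k}\ket{l}$ living on $\mathbb C^{s}\otimes\mathbb C^{t}$. Hence it suffices to produce a diagram whose interpretation is this state, and this is where the single $st$-dimensional wire enters.

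Next I would apply the qudit normal form (\ref{normalformdit}) in the special case $d=st$, $m=1$. As proved earlier, this gives a diagram on a single wire of dimension $st$ whose interpretation is the column vector $\sum_{j=0}^{st-1}a_j\ket{j}$, built from $st-1$ row additions followed by one row multiplication (precisely the $\overrightarrow{a}_i$ for $0\leq i\leq st-2$ and the $\overrightarrow{a}_{st-1}$ box appearing in (\ref{nomalformfnit})). I would then post-compose this $st$-dimensional state with the dimension-splitter, whose interpretation is $\sum_{k=0}^{st-1}\ket{[\frac{k}{t}]}\ket{k-t[\frac{k}{t}]}\bra{k}$; its action on $\ket{j}$ is just Euclidean division, since writing $j=kt+l$ with $0\leq k\leq s-1$ and $0\leq l\leq t-1$ gives $[\frac{j}{t}]=k$ and $j-t[\frac{j}{t}]=l$. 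The composite interpretation is therefore $\sum_{j}a_j\ket{[\frac{j}{t}]}\ket{j-t[\frac{j}{t}]}=\sum_{k,l}a_{kt+l}\ket{k}\ket{l}$, which is exactly the vectorised $M$, and bending the $t$-wire recovers $M$ itself.

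The verification thus splits into three largely independent checks: that the qudit normal form at dimension $st$ yields $\sum_j a_j\ket{j}$ (already available), that the splitter realises the reindexing $j\leftrightarrow(k,l)$ through Euclidean division, and that map-state duality respects the dimension labels. I expect the main obstacle to be purely bookkeeping, namely matching the flattened index $kt+l$ that enumerates the matrix entries against the basis ordering implicit in the splitter and in Lemma \ref{qbitstovectorditlm}, together with the degenerate dimensions $s=1$, $t=1$, and the scalar case $st=1$ (where $\mathbb C^{1}=\mathbb C$ with $\ket{0}=1$, the splitter collapses to an identity, and one falls back on the $m=0$ scalar normal form). Universality then follows immediately, since the entries $a_i$ are read off directly from the boxes $\overrightarrow{a}_i$, so every $s\times t$ matrix over $\mathbb C$ is represented.
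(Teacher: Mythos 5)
Your proposal is correct and follows essentially the same route as the paper: both obtain the state $\sum_{j=0}^{st-1}a_j\ket{j}$ from the qudit normal form at dimension $st$, then compose with the dimension-splitter (whose interpretation is exactly Euclidean division, $j=kt+l\mapsto\ket{k}\ket{l}$) and bend the $t$-labelled wire to pass between the state and the matrix $M$. The paper merely packages the final verification as a direct computation of the $(v,l)$ matrix entry, $\bra{v}(\cdots)\ket{l}=a_{l+vt}$, rather than phrasing it through map-state duality, but the underlying calculation is the same.
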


\begin{proof}
By the normal form for qudits, the diagram excluding the dimension-splitter represents the following vector 
\[
 \begin{pmatrix}
        a_0  \\  a_1\\
        \vdots \\ a_{st-2}\\
        a_{st-1} \end{pmatrix}=\sum_{i=0}^{st-1}a_i\ket{i}
\]
Then the whole diagram (\ref{nomalformfnit}) represents 
\[(I_s\otimes  \sum_{i=0}^{t-1}\bra{i}\bra{i})( \sum_{k=0}^{st-1}\ket{[\frac{k}{t}]}\ket{k-t[\frac{k}{t}]}\bra{k}\otimes I_t)(\sum_{i=0}^{st-1}a_i\ket{i}\otimes I_t)\]

Then in matrix form, its element in $v$-th row and the $l$-th column ($0\leq v\leq s-1, 0\leq l\leq t-1$) is 
\[
\begin{array}{l}
\bra{v}(I_s\otimes  \sum_{i=0}^{t-1}\bra{i}\bra{i})( (\sum_{k=0}^{st-1}\ket{[\frac{k}{t}]}\ket{k-t[\frac{k}{t}]}\bra{k})\otimes I_t)((\sum_{i=0}^{st-1}a_i\ket{i})\otimes I_t)\ket{l}\vspace{0.5cm}\\
=\bra{v}(I_s\otimes  \sum_{i=0}^{t-1}\bra{i}\bra{i})( \sum_{k=0}^{st-1}a_k\ket{[\frac{k}{t}]}\ket{k-t[\frac{k}{t}]}\ket{l}\vspace{0.5cm}\\
=\bra{v}\sum_{k-t[\frac{k}{t}]=l}a_k\ket{[\frac{k}{t}]}\vspace{0.5cm}\\
=\sum_{k-t[\frac{k}{t}]=l, v=[\frac{k}{t}]}a_k\bra{v}\ket{[\frac{k}{t}]}=a_{l+vt}

\end{array}
\]

\end{proof}

\begin{remark}
Given matrix (\ref{anymatrix}), the diagram of form (\ref{nomalformfnit}) is unique, thus will be called a norm form of qufinite ZX-calculus.  This representation actually works for matrices over arbitrary commutative semirings as well.

Another interesting observation is that the requirement of $t=u$ when defining multiplication of matrices $M_{s\times t}N_{u\times v}$  can now be clearly seen as the need of type matching  for composing the normal form  (\ref{nomalformfnit}).

\end{remark}

\section{Conclusion  and further work}
 In this paper, we generalise the qubit ZX-calculus to qudit ZX-calculus in any finite dimension by introducing Z spider with complex-number phase vector and generalised triangle node as new generators. As a consequence, we obtained qudit rewriting rules  which can be seen as direct generalisation of qubit rules. Furthermore, we construct a normal form for any qudit vectors, which exhibits universality for qudit ZX-calculus. Finally, we propose qufinite ZX-calculus as a unified framework for qudit ZX-calculi in all finite dimensions, with a normal form for matrix of any finite size.    
 
The next work would naturally be to prove the completeness of qudit ZX-calculus and qufinite ZX-calculus, over complex numbers and arbitrary commutative semirings respectively, following the method used in \cite{qwangnormalformbit,qwangrsmring}. Another interesting work would be to give a fine-grained version of  the diagrammatic reconstruction of finite quantum theory \cite{Selby2021reconstructing} within the framework of qufinite ZX-calculus.


  \section*{Acknowledgements} 

The author would like to acknowledge the grant FQXi-RFP-CPW-2018. The author also thanks useful discussions at the ZX-calculus seminar. 

\bibliographystyle{eptcs}
\bibliography{generic} 

\end{document}